\newtheorem{theorem}{Theorem}
\newtheorem{lemma}[theorem]{Lemma}
\newtheorem{corollary}[theorem]{Corollary}
\newcommand{\rmq}{\ensuremath\mathsf{rmq}}
\newcommand{\update}{\ensuremath\mathsf{update}}
\newcommand{\weight}{\ensuremath\mathsf{weight}}
\newcommand{\lca}{\ensuremath\mathsf{lca}}
\newcommand{\pmini}{\ensuremath\mathsf{pmin}}
\newcommand{\mini}{\ensuremath\mathsf{min}}
\newcommand{\compress}{\ensuremath\mathsf{compress}}
\newcommand{\spread}{\ensuremath\mathsf{spread}}
\newcommand{\ceil}[1]{\left\lceil{#1}\right\rceil}
\newcommand{\floor}[1]{\left\lfloor{#1}\right\rfloor}
\renewcommand{\angle}[1]{\langle{#1}\rangle}
\tikzstyle{Red Dot}=[fill=red, draw=black, shape=circle]
\tikzstyle{Green Dot}=[fill=white, draw=green, shape=circle]
\tikzstyle{Empty}=[fill=white, draw=black, shape=circle]
\tikzstyle{Black}=[fill=black, draw=black, shape=circle]
\tikzstyle{Red fill}=[-, fill={rgb,255: red,255; green,88; blue,91}]
\tikzstyle{grey fill}=[fill={rgb,255: red,207; green,207; blue,207}, draw=black, shape=circle]
\tikzstyle{Red Circle}=[fill=white, draw=red, shape=circle]
\tikzstyle{letter}=[fill=none, draw=none, shape=circle]
\tikzstyle{Left}=[<-]
\tikzstyle{Right}=[->]
\begin{document}
\title{Dynamic Range Minimum Queries on the Ultra-Wide Word RAM\footnote{An extended abstract appeared at the \emph{50th International Conference on Current Trends in Theory and Practice of Computer Science}~\cite{BGSL2024}.}}
%
%
\author{Philip Bille \\ \texttt{phbi@dtu.dk} \and Inge Li G{\o}rtz \\ \texttt{inge@dtu.dk} \\ M{\'a}ximo P{\'e}rez-L{\'o}pez \\\texttt{mpelo@dtu.dk} \and Tord Stordalen}

%
%
%
\maketitle              
\begin{abstract}
We consider the dynamic range minimum problem on the ultra-wide word RAM model of computation. This model extends the classic $w$-bit word RAM model with special ultrawords of length $w^2$ bits that support standard arithmetic and boolean operation and scattered memory access operations that can access $w$ (non-contiguous) locations in memory. The ultra-wide word RAM model captures (and idealizes) modern vector processor architectures. The goal in the dynamic range minimum problem is to maintain an array $A$ of $n$ $w$-bit integers subject to range minimum queries (given indices $i$ and $j$ return a smallest integer in the subarray $A[i..j]$) and updates (given index $i$ and integer $\alpha$ set $A[i] \leftarrow \alpha$).

Our main result is a data structure that supports range minimum queries and updates in $O(\log \log \log n)$ time and uses $O(n/\log n)$ space in addition to the input array. This exponentially improves the time of existing techniques. Our result is based on a simple reduction to prefix minimum computations on sequences $O(\log n)$ words combined with a new parallel, recursive implementation of these.
\end{abstract}

\section{Introduction} 
Supporting \emph{range minimum queries} (RMQ) on arrays is a well-studied, classic data structure problem, see e.g.,~\cite{harel1984fast, Yao1985, CR1989, GBT1984, BBGSV1989, AGKR2004, FH2011, BFC2000, DLW2014, YA2010, BDLRR2016, ST1983, PD2006, BCR1996, AHR1998, BDR2011}. This paper considers the \emph{dymamic RMQ problem} defined as follows: maintain an array $A[0, \ldots, n-1]$ of $w$-bit integers and subject to the following operations. 
\begin{itemize}
	\item $\rmq(i,j)$: return a smallest integer in the subarray $A[i..j]$. 
	\item $\update(i, \alpha)$: set $A[i] \leftarrow \alpha$. 
\end{itemize}
On most models of computation, the complexity of the dynamic RMQ problem is well-understood~\cite{ST1983, PD2006, BCR1996, AHR1998, BDR2011}. For instance, on the word RAM, a tight $\Theta(\log n/\log \log n)$ time bound on the operations is known~\cite{BDR2011}. Hence, a natural question is whether practical models of computation capturing modern hardware advances will allow us to improve this bound significantly. One such model is the \emph{ultra-wide word RAM model} (UWRAM) introduced by Farzan et al.~\cite{FLNS2015}. The UWRAM extends the word RAM model with special \emph{ultrawords} of $w^2$ bits. The model supports standard boolean and arithmetic operations on ultrawords and \emph{scattered} memory operations that access $w$ words in parallel. The UWRAM model captures (and idealizes) modern vector processing architectures~\cite{CRDI2007, LNOM2008, Reinders2013, Stephensetal2017}. We present the details of the UWRAM model of computation in Section~\ref{sec:uwram_model}. By extending recent techniques for the UWRAM model~\cite{BGS2020}, we can immediately solve the dynamic RMQ problem using $O(\log \log n)$ time per operation.  

\subsection{Results and Techniques}
Our main result is an exponential improvement of the $O(\log \log n)$ time bound. More precisely, we show the following bound:  

\begin{theorem}\label{thm:main}
	Given an array $A$ of $n$ $w$-bit integers, we can construct in $O(n)$ time a data structure on the UWRAM that supports $\rmq$ and $\update$ in $O(\log \log \log n)$ time and uses $O(n/\log n)$ words of space in addition to $A$.
\end{theorem}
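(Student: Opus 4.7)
My plan combines two ingredients: a simple tree-based reduction from $\rmq$/$\update$ to prefix-minimum on $O(\log n)$ packed words, and a new parallel recursive algorithm for prefix-minimum achieving $T(m) = O(\log \log m)$ time on an ultraword of $m \leq w$ values.

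For the reduction I would maintain a balanced binary tree over $A$ where each internal node stores the minimum of its subtree. The $O(\log n)$ ancestors of any leaf, together with their off-path siblings, can be fetched with a single scattered read in $O(1)$ time on the UWRAM. An $\update(i, \alpha)$ refreshes the ancestors of leaf $i$: bottom-up, the new subtree minimum of the $j$-th ancestor is the minimum of $\alpha$ and the off-path sibling minima encountered along the way, i.e., a prefix minimum of $O(\log n)$ packed $w$-bit values. An $\rmq(i,j)$ is answered by the minimum over the $O(\log n)$ canonical subtrees along the two leaf-to-root paths, again obtainable via one prefix and one suffix minimum packed in a single ultraword. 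All remaining work (gathering, splitting, writing back) is $O(1)$ using scattered memory operations and ultraword arithmetic.

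The main technical step is the prefix-minimum algorithm. On an ultraword holding $m$ values I would use a square-root recursion: partition into $\sqrt{m}$ blocks of size $\sqrt{m}$; in parallel inside the ultraword compute the prefix minima within every block (recursively, as a single size-$\sqrt{m}$ sub-instance that operates on all blocks in lock-step); extract the $\sqrt{m}$ block minima; recursively compute \emph{their} prefix minimum; and in $O(1)$ combine by taking the min of each block's local prefix with the appropriate cross-block offset using ultraword \textsf{min}. The target recurrence is $T(m) = T(\sqrt{m}) + O(1)$, which gives $T(m) = O(\log \log m)$ and, for $m = \Theta(\log n)$, the claimed $O(\log \log \log n)$ bound per operation. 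The linear-space and $O(n)$-time construction follow from a routine bottom-up tree build.

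The main obstacle will be achieving a \emph{single} recursive call per level. The naive split has within-block and cross-block subproblems that are sequentially dependent (the cross-block prefix needs the block minima produced by the within-block step) and each of size $\sqrt{m}$, yielding $T(m) = 2T(\sqrt{m}) + O(1) = O(\log m)$---one logarithm too slow. The crux is to merge the two subproblems into a single size-$\sqrt{m}$ invocation: either by packing the cross-block instance alongside the $\sqrt{m}$ within-block instances in the same ultraword so the recursion dispatches all of them concurrently, or by exploiting the identity that block minima coincide with the last entries of the per-block local prefix mins and re-organising the computation so that one recursive pass produces precisely the data needed for the next. Proving that this merger fits within one ultraword and introduces only $O(1)$ additional glue work is the heart of the argument.
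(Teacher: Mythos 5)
Your reduction half is essentially the paper's: a balanced min-tree, scattered reads of the path/off-path node weights, and Equation~\eqref{eq:updaterecursion}-style prefix minima over $O(\log n)$ packed words. The genuine gap is in the prefix-minimum algorithm, and it sits exactly where you place "the heart of the argument" without resolving it. Your two candidate fixes for the $2T(\sqrt m)$ problem do not go through: the cross-block instance (the prefix minimum of the $\sqrt m$ block minima) cannot be packed into the same recursive invocation as the $\sqrt m$ within-block instances, because it is sequentially dependent on their output --- the block minima are precisely the leftmost entries of the within-block prefix minima, so they do not exist until that call returns; and computing them by a separate, earlier pass is itself a $\Theta(\log\log)$-depth problem, so it does not break the dependency either. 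The paper's resolution is different and is the one idea your proposal is missing: for a sequence of $k=O(\sqrt w)$ words, \emph{all} $k^2=O(w)$ pairwise comparisons fit in a constant number of ultrawords, so one builds $\widehat X$ ($k$ concatenated copies of $X$) and $\widetilde X$ (each $X\angle{i}$ repeated $k$ times) with shuffled reads, compares them componentwise, and reads off for each $j$ the unique $i\le j$ with $X\angle{i}\le X\angle{j},\dots,X\angle{0}$, placing the winners with one shuffled write --- a \emph{constant-time} prefix minimum for short sequences (Lemma~\ref{lem:prefixminsmall}), which parallelizes to a constant-time $b$-way prefix minimum of any sequence of length $O(w/b)$ (Corollary~\ref{cor:prefixminparallel}). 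With that primitive the cross-block step at every level costs $O(1)$ rather than a second recursive call, and the recursion degenerates into an iteration that squares $b$ each round ($b=2,4,16,\dots$), giving $T(\ell)=O(\log\log\ell)$. Without it your recurrence does not close, so the proposal as written proves only the $O(\log\log n)$ bound you are trying to beat.

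Two smaller points you gloss over. First, linear space is not "routine": feeding the scattered reads requires, per leaf, the precomputed addresses and depths of its $O(\log n)$ path/off-path nodes, which is $O(n\log n)$ words; the paper recovers $O(n)$ by blocking $A$ into groups of $\log n$ entries and building the tree only over the $n/\log n$ block minima. Second, the constant-time extraction step relies on a shuffled write whose target addresses must be distinct, which fails when $X$ has repeated values; the paper handles ties by pairing each entry with its index at the cost of doubled component precision. Neither is fatal, but both need to appear in a complete proof.
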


Technically, our solution is based on a simple linear space and logarithmic time folklore solution, which we call the \emph{range minimum tree} (see Section~\ref{sec:rangeminimumtree} for a detailed description). The range minimum tree is a balanced binary tree over the input array and supports operations in $O(\log n)$ time by sequentially traversing the tree. On the UWRAM, we show how to efficiently compute the access patterns of the operations on the range minimum tree in parallel using scattered memory access operations and prefix minimum computation on ultrawords. More precisely, we show that a given algorithm for a prefix minimum computation on a sequence of words of length $\ell = O(w)$ stored in a constant number of ultrawords that uses time $t(\ell)$ implies a linear space solution for dynamic RMQ that supports both operations in $O(t(\log n))$ time. If we implement a standard parallel prefix computation algorithm~\cite{LF1980} (see also the survey by Blelloch~\cite{Blelloch1990}) using the UWRAM techniques in Bille et al.~\cite{BGS2022} we immediately obtain an algorithm that uses $O(\log \ell)$ time. Our main technical contribution is a new, exponentially faster prefix minimum algorithm that achieves $O(\log \log \ell)$ time. The key idea is a constant time prefix minimum algorithm for "short" sequences of $O(\sqrt{w})$ words that takes advantage of parallel computations on multiple copies of the sequence packed into a constant number of ultrawords (note that a constant number of ultrawords can store $O(\sqrt{w})$ copies of a sequence of $O(\sqrt{w})$ words). We implement the idea recursively and in parallel on each recursion level to obtain a fast solution for general sequences of words of length $O(w)$. Each recursion step uses constant time, and the depth is $O(\log \log \ell)$, leading to the $O(\log \log \ell)$ time bound. By combining an implicit heap layout with a standard indirection technique we show how to implement the data structure using $O(n/\log n)$ words of space in addition to the $n$ words used by the input array. 

\subsection{Outline}
The paper is organized as follows. In Section~\ref{sec:uwram_model} and~\ref{sec:rangeminimumtree} we review the UWRAM model of
computation and the range minimum tree. In Section~\ref{sec:uwramrangeminimumtree}, we present the UWRAM implementation of the range minimum tree that leads to the reduction to prefix minimum computation on word sequences. In Sections~\ref{subsec:reducing_space} and~\ref{subsec:heap} we show how to implement the reduction using in $O(n)$ space and $O(n/\log n)$ space, respectively, in addition to the input array. Finally, in Section~\ref{sec:prefixminwordsequence}, we present our fast prefix minimum algorithm and plug it into our reduction to show our main result of Theorem~\ref{thm:main}. 

\section{The Ultra-Wide Word RAM Model}\label{sec:uwram_model}
The \emph{word RAM} model of computation~\cite{Hagerup1998} consists of an unbounded memory of $w$-bit words and a standard instruction set including arithmetic, boolean, and bitwise operations (denoted `$\&$,' `$|$,' and `$\sim$' for \textit{and}, \textit{or} and \textit{not}) and shifts (denoted `$\gg$' and `$\ll$') such as those available in standard programming languages (e.g., C).  We assume that we can store a pointer to the input in a single word and hence $w \geq \log n$, where $n$ is the input size. The time complexity of a word RAM algorithm is the number of instructions, and the space is the number of words the algorithm stores. For our main  result, we focus here on the space used in addition to input array. 

The \emph{ultra-wide word RAM} (UWRAM) model of computation~\cite{FLNS2015} extends the word RAM model with special \emph{ultrawords} of $w^2$ bits. As in~\cite{FLNS2015}, we distinguish between the \emph{restricted UWRAM} that supports a minimal set of instructions on ultrawords consisting of addition, subtraction, shifts, and bitwise boolean operations, and the \emph{multiplication UWRAM} that additionally supports multiplications. We extend the notation for bitwise operations and shifts to ultrawords. The UWRAM (restricted and multiplication) also supports contiguous and scattered memory access operations, as described below. The time complexity is the number of instructions (on standard words or ultrawords), and the space complexity is the number of words used by the algorithms, where each ultraword is counted as $w$ words. The UWRAM model captures (and idealizes) modern vector processing architectures~\cite{CRDI2007, LNOM2008, Reinders2013, Stephensetal2017}. See Farzan et al.~\cite{FLNS2015} for a detailed discussion of the applicability of the UWRAM model. 

\begin{figure}[t]
    \centering
    \includegraphics[scale=0.6]{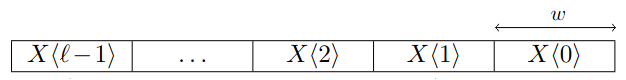}
    \caption{The layout of a word sequence $X$.}
    \label{fig:ultraword_example}
\end{figure}

\subsection{Instructions and Componentwise Operations} 
Recall that ultrawords consist of $w^2$ bits. We often use ultrawords to store and manipulate small sequences of $O(w)$ words. A \emph{word sequence} $X$ of length $\ell$ is a sequence of $\ell$ words (also called the components of $X$). We number the words from right to left starting from $0$ and use the notation $X\angle{i}$ to denote the $i$th word in $X$ (see Figure \ref{fig:ultraword_example}). 

We define common operations on word sequences that we will use later. Let $X$ and $Y$ be word sequences of length $\ell$. The \emph{componentwise addition} of $X$ and $Y$ is the word sequence $Z$ such that $Z\angle{i} = X\angle{i} + Y\angle{i}$. The \emph{componentwise comparison} of $X$ and $Y$ is the word sequence $Z$ such that $Z\angle{i} = 1$ if $X\angle{i} < Y\angle{i}$ and $0$ otherwise. Given another word sequence $I$ of length $\ell$, where each word is either $0$ and $1$ (we will call this a \emph{binary word sequence}), the \emph{componentwise extract} of $X$ wrt. $I$ is the word sequence $Z$ such that $Z\angle{i} = X\angle{i}$ if $I\angle{i} = 1$ and $Z\angle{i} = 0$ otherwise. We can also \emph{concatenate} $X$ and $Y$, denoted $X \cdot Y$ producing the length $2\ell$ word sequence $X\angle{\ell -1} \cdots X\angle{0}Y\angle{\ell -1} \cdots Y\angle{0}$ or \emph{split} $X$ at any point $k$ into word sequences $X\angle{\ell -1} \cdots X\angle{k+1}$ and $X\angle{k} \cdots X\angle{0}$. Note that using two split operations, we can compute any contiguous subsequence $X\angle{i}\cdots X\angle{j}$ of $X$. All these operations can be implemented in constant time for word sequences of length $O(w)$ on the restricted UWRAM using standard-word level parallelism techniques~\cite{Hagerup1998, BGS2022, BGS2024}. Note that we can manipulate word sequences with length $\leq cw$, for some constant $c > 1$,  by storing them in $c$ ultrawords and simulating operations on them in constant time. 

The UWRAM also supports a \emph{compress} operation that takes a binary word sequence $I$ of length $\ell$ and constructs the bitstring of the $\ell$ bits of $I$. The inverse \emph{spread} operation takes a bitstring of length $\ell$ and constructs the corresponding binary word sequence of length $\ell$. This is the UWRAM model that we will use throughout the rest of the paper. Note that these operations are widely supported directly in modern vector processing architectures. see Figure \ref{fig:comp-wise-ops}.

Given a word sequence of $X$ of length $\ell$, we define the \emph{prefix minimum} of $X$, denoted $\pmini(X)$, to be the word sequence $P$ of length $\ell$ such that $P\angle{i} = \mini(X\angle{i}, \ldots, X\angle{0})$.  
We also define the \emph{minimum} of $X$, denoted $\mini(X)$, as the smallest entry among all entries in $X$. Note that we can use a prefix minimum algorithm to compute the minimum. The prefix minimum operation is central in our solutions, and as discussed, we will show how to implement it in $O(\log \log \ell)$ time. 

Some of our operations require precomputed constant word sequences, which we assume are available (e.g., computed at "compile-time"). If not, we can compute those needed for Theorem~\ref{thm:main} in $\log^{O(1)} n$ time, which is negligible. 

\begin{figure}
    \centering
\begin{tikzpicture}
	\begin{pgfonlayer}{nodelayer}
		\node  (0) at (-7, 2.5) {};
		\node  (1) at (-6.5, 2.25) {4};
		\node  (2) at (-5.5, 2.25) {3};
		\node  (3) at (-4.5, 2.25) {9};
		\node  (4) at (-3.5, 2.25) {2};
		\node  (5) at (-7, 2) {};
		\node  (6) at (-3, 2) {};
		\node  (7) at (-3, 2.5) {};
		\node  (8) at (-4, 2.5) {};
		\node  (9) at (-4, 2) {};
		\node  (10) at (-5, 2.5) {};
		\node  (11) at (-5, 2) {};
		\node  (12) at (-6, 2.5) {};
		\node  (13) at (-6, 2) {};
		\node  (14) at (-7, 1.5) {};
		\node  (15) at (-6.5, 1.25) {1};
		\node  (16) at (-5.5, 1.25) {6};
		\node  (17) at (-4.5, 1.25) {4};
		\node  (18) at (-3.5, 1.25) {2};
		\node  (19) at (-7, 1) {};
		\node  (20) at (-3, 1) {};
		\node  (21) at (-3, 1.5) {};
		\node  (22) at (-4, 1.5) {};
		\node  (23) at (-4, 1) {};
		\node  (24) at (-5, 1.5) {};
		\node  (25) at (-5, 1) {};
		\node  (26) at (-6, 1.5) {};
		\node  (27) at (-6, 1) {};
		\node  (28) at (-7, 0.5) {};
		\node  (29) at (-6.5, 0.25) {5};
		\node  (30) at (-5.5, 0.25) {9};
		\node  (31) at (-4.5, 0.25) {13};
		\node  (32) at (-3.5, 0.25) {4};
		\node  (33) at (-7, 0) {};
		\node  (34) at (-3, 0) {};
		\node  (35) at (-3, 0.5) {};
		\node  (36) at (-4, 0.5) {};
		\node  (37) at (-4, 0) {};
		\node  (38) at (-5, 0.5) {};
		\node  (39) at (-5, 0) {};
		\node  (40) at (-6, 0.5) {};
		\node  (41) at (-6, 0) {};
		\node  (42) at (-5, 1.75) {+};
		\node  (43) at (-5, 0.75) {=};
		\node  (44) at (3.25, 2.5) {};
		\node  (45) at (3.75, 2.25) {4};
		\node  (46) at (4.75, 2.25) {3};
		\node  (47) at (5.75, 2.25) {9};
		\node  (48) at (6.75, 2.25) {2};
		\node  (49) at (3.25, 2) {};
		\node  (50) at (7.25, 2) {};
		\node  (51) at (7.25, 2.5) {};
		\node  (52) at (6.25, 2.5) {};
		\node  (53) at (6.25, 2) {};
		\node  (54) at (5.25, 2.5) {};
		\node  (55) at (5.25, 2) {};
		\node  (56) at (4.25, 2.5) {};
		\node  (57) at (4.25, 2) {};
		\node  (58) at (3.25, 1.5) {};
		\node  (59) at (3.75, 1.25) {1};
		\node  (60) at (4.75, 1.25) {0};
		\node  (61) at (5.75, 1.25) {1};
		\node  (62) at (6.75, 1.25) {0};
		\node  (63) at (3.25, 1) {};
		\node  (64) at (7.25, 1) {};
		\node  (65) at (7.25, 1.5) {};
		\node  (66) at (6.25, 1.5) {};
		\node  (67) at (6.25, 1) {};
		\node  (68) at (5.25, 1.5) {};
		\node  (69) at (5.25, 1) {};
		\node  (70) at (4.25, 1.5) {};
		\node  (71) at (4.25, 1) {};
		\node  (72) at (3.25, 0.5) {};
		\node  (73) at (3.75, 0.25) {4};
		\node  (74) at (4.75, 0.25) {0};
		\node  (75) at (5.75, 0.25) {9};
		\node  (76) at (6.75, 0.25) {0};
		\node  (77) at (3.25, 0) {};
		\node  (78) at (7.25, 0) {};
		\node  (79) at (7.25, 0.5) {};
		\node  (80) at (6.25, 0.5) {};
		\node  (81) at (6.25, 0) {};
		\node  (82) at (5.25, 0.5) {};
		\node  (83) at (5.25, 0) {};
		\node  (84) at (4.25, 0.5) {};
		\node  (85) at (4.25, 0) {};
		\node  (86) at (3.75, 1.75) {extract};
		\node  (87) at (5.25, 0.75) {=};
		\node  (88) at (-2, 2.5) {};
		\node  (89) at (-1.5, 2.25) {4};
		\node  (90) at (-0.5, 2.25) {3};
		\node  (91) at (0.5, 2.25) {9};
		\node  (92) at (1.5, 2.25) {2};
		\node  (93) at (-2, 2) {};
		\node  (94) at (2, 2) {};
		\node  (95) at (2, 2.5) {};
		\node  (96) at (1, 2.5) {};
		\node  (97) at (1, 2) {};
		\node  (98) at (0, 2.5) {};
		\node  (99) at (0, 2) {};
		\node  (100) at (-1, 2.5) {};
		\node  (101) at (-1, 2) {};
		\node  (102) at (-2, 1.5) {};
		\node  (103) at (-1.5, 1.25) {1};
		\node  (104) at (-0.5, 1.25) {6};
		\node  (105) at (0.5, 1.25) {4};
		\node  (106) at (1.5, 1.25) {2};
		\node  (107) at (-2, 1) {};
		\node  (108) at (2, 1) {};
		\node  (109) at (2, 1.5) {};
		\node  (110) at (1, 1.5) {};
		\node  (111) at (1, 1) {};
		\node  (112) at (0, 1.5) {};
		\node  (113) at (0, 1) {};
		\node  (114) at (-1, 1.5) {};
		\node  (115) at (-1, 1) {};
		\node  (116) at (-2, 0.5) {};
		\node  (117) at (-1.5, 0.25) {0};
		\node  (118) at (-0.5, 0.25) {1};
		\node  (119) at (0.5, 0.25) {0};
		\node  (120) at (1.5, 0.25) {1};
		\node  (121) at (-2, 0) {};
		\node  (122) at (2, 0) {};
		\node  (123) at (2, 0.5) {};
		\node  (124) at (1, 0.5) {};
		\node  (125) at (1, 0) {};
		\node  (126) at (0, 0.5) {};
		\node  (127) at (0, 0) {};
		\node  (128) at (-1, 0.5) {};
		\node  (129) at (-1, 0) {};
		\node  (130) at (0, 1.75) {$\leq$};
		\node  (131) at (0, 0.75) {=};
		\node  (132) at (-4.75, -1) {};
		\node  (133) at (-4.25, -1.25) {1};
		\node  (134) at (-3.25, -1.25) {0};
		\node  (135) at (-2.25, -1.25) {1};
		\node  (136) at (-1.25, -1.25) {0};
		\node  (137) at (-4.75, -1.5) {};
		\node  (138) at (-0.75, -1.5) {};
		\node  (139) at (-0.75, -1) {};
		\node  (140) at (-1.75, -1) {};
		\node  (141) at (-1.75, -1.5) {};
		\node  (142) at (-2.75, -1) {};
		\node  (143) at (-2.75, -1.5) {};
		\node  (144) at (-3.75, -1) {};
		\node  (145) at (-3.75, -1.5) {};
		\node  (146) at (1.25, -1) {};
		\node  (147) at (2.25, -1) {};
		\node  (148) at (1.25, -1.5) {};
		\node  (149) at (2.25, -1.5) {};
		\node  (150) at (1.75, -1.25) {1010};
		\node  (151) at (-0.25, -1.5) {};
		\node  (152) at (1, -1.5) {};
		\node  (153) at (-0.25, -1) {};
		\node  (154) at (1, -1) {};
		\node  (155) at (0.5, -1.75) {compress};
		\node  (156) at (0.5, -0.75) {spread};
	\end{pgfonlayer}
	\begin{pgfonlayer}{edgelayer}
		\draw (5.center)
			 to (0.center)
			 to (7.center)
			 to (6.center)
			 to cycle;
		\draw (12.center) to (13.center);
		\draw (10.center) to (11.center);
		\draw (8.center) to (9.center);
		\draw (19.center)
			 to (14.center)
			 to (21.center)
			 to (20.center)
			 to cycle;
		\draw (26.center) to (27.center);
		\draw (24.center) to (25.center);
		\draw (22.center) to (23.center);
		\draw (33.center)
			 to (28.center)
			 to [in=180, out=0] (35.center)
			 to (34.center)
			 to cycle;
		\draw (40.center) to (41.center);
		\draw (38.center) to (39.center);
		\draw (36.center) to (37.center);
		\draw (49.center)
			 to (44.center)
			 to (51.center)
			 to (50.center)
			 to cycle;
		\draw (56.center) to (57.center);
		\draw (54.center) to (55.center);
		\draw (52.center) to (53.center);
		\draw (63.center)
			 to (58.center)
			 to (65.center)
			 to (64.center)
			 to cycle;
		\draw (70.center) to (71.center);
		\draw (68.center) to (69.center);
		\draw (66.center) to (67.center);
		\draw (77.center)
			 to (72.center)
			 to [in=180, out=0] (79.center)
			 to (78.center)
			 to cycle;
		\draw (84.center) to (85.center);
		\draw (82.center) to (83.center);
		\draw (80.center) to (81.center);
		\draw (93.center)
			 to (88.center)
			 to (95.center)
			 to (94.center)
			 to cycle;
		\draw (100.center) to (101.center);
		\draw (98.center) to (99.center);
		\draw (96.center) to (97.center);
		\draw (107.center)
			 to (102.center)
			 to (109.center)
			 to (108.center)
			 to cycle;
		\draw (114.center) to (115.center);
		\draw (112.center) to (113.center);
		\draw (110.center) to (111.center);
		\draw (121.center)
			 to (116.center)
			 to [in=180, out=0] (123.center)
			 to (122.center)
			 to cycle;
		\draw (128.center) to (129.center);
		\draw (126.center) to (127.center);
		\draw (124.center) to (125.center);
		\draw (137.center)
			 to (132.center)
			 to (139.center)
			 to (138.center)
			 to cycle;
		\draw (144.center) to (145.center);
		\draw (142.center) to (143.center);
		\draw (140.center) to (141.center);
		\draw (149.center) to (147.center);
		\draw (147.center) to (146.center);
		\draw (146.center) to (148.center);
		\draw (148.center) to (149.center);
		\draw [style=Right] (151.center) to (152.center);
		\draw [style=Right] (154.center) to (153.center);
	\end{pgfonlayer}
\end{tikzpicture}
    \caption{Examples of component-wise instructions on the UWRAM model.}
    \label{fig:comp-wise-ops}
\end{figure}




\subsection{Memory Access}\label{sec:memory_access} The UWRAM supports standard memory access operations that read or write a single word or a sequence of $w$ contiguous words. More interestingly, the UWRAM also supports \emph{scattered} access operations that access $w$ memory locations (not necessarily contiguous) in parallel. Given a word sequence $A$ containing $w$ memory addresses, a \emph{scattered read} loads the contents of the addresses into a word sequence $X$, such that $X\angle{i}$ contains the contents of memory location $A\angle{i}$. Given word sequences $X$ and $A$ of length $O(w)$ a  \emph{scattered write} sets the contents of memory location $A\angle{i}$ to be $X\angle{i}$. We can implement the following \emph{shuffle} operations in constant time using scattered read and write. Given two word sequences $A$ and $X$ of length $\ell$, a \emph{shuffled read} computes the word sequence $Y$, such that $Y\angle{i}=X\angle{A\angle{i}}$. Given word sequences $X$ and $A$ of length $\ell$, a \emph{shuffled write} computes the word sequence $Y$, such that $Y\angle{A\angle{i}}=X\angle{i}$. See Figure \ref{fig:shuffles} for an example of a shuffled read and a shuffled write. Scattered memory accesses capture the memory model used in IBM's \emph{Cell} architecture~\cite{CRDI2007}. They also appear (e.g., \texttt{vpgatherdd}) in Intel's AVX vector extension~\cite{Reinders2013}. Scattered memory access operations were also proposed by Larsen and Pagh~\cite{LP2012} in the context of the I/O model of computation. 
Note that while the addresses for scattered writes must be distinct, we can read simultaneously from the same address. 
\subsection{Extended Ultrawords}\label{sec:extendedultrawords}
Occasionally, we need to work with intermediate integers of precision $cw$ bits for a constant $c>1$ (see the end of Section~ \ref{subsec:prefixminsmallwords}). In this case we use $c$ contiguous words to represent an integer, and adjust the word-level parallelism masks of ~\cite{Hagerup1998, BGS2022, BGS2024} accordingly to implement the operations of the previous subsections (for example, a mask $(10^{w-1})^w$ becomes $(10^{cw-1})^w$, of $cw$ words). To simulate compress we need to bring all the words that contain the bits to compress to contiguous positions in memory. For this we use a shuffled write instead, where the write addresses are given by a word sequence that contains the indices $0, 1, \ldots, w-1$ separated by $c-1$ zero words between each index. The spread operation can be simulated in a similar way. All the algorithms in this paper can be simulated with $c=2$.\footnote{If we wish to use all $w$ bits of precision of an integer, we can either implement the test bit manipulation inside a single word, or use a second word of precision for the test bits, which drives the maximum precision up to $3$ words per integer.}

\begin{figure}
    \centering
\begin{tikzpicture}
	\begin{pgfonlayer}{nodelayer}
		\node  (0) at (-6.5, 2.75) {};
		\node  (1) at (-6, 2.5) {9};
		\node  (4) at (-5, 2.5) {3};
		\node  (5) at (-6.5, 2.25) {};
		\node  (6) at (-0.5, 2.25) {};
		\node  (7) at (-0.5, 2.75) {};
		\node  (8) at (-5.5, 2.75) {};
		\node  (9) at (-5.5, 2.25) {};
		\node  (44) at (-4.5, 2.75) {};
		\node  (45) at (-4.5, 2.25) {};
		\node  (46) at (-3.5, 2.75) {};
		\node  (47) at (-3.5, 2.25) {};
		\node  (48) at (-4, 2.5) {4};
		\node  (49) at (-3, 2.5) {2};
		\node  (50) at (-2.5, 2.75) {};
		\node  (51) at (-2.5, 2.25) {};
		\node  (52) at (-2, 2.5) {8};
		\node  (53) at (-1.5, 2.75) {};
		\node  (54) at (-1.5, 2.25) {};
		\node  (55) at (-1, 2.5) {5};
		\node  (56) at (-6.5, 1.75) {};
		\node  (57) at (-6, 1.5) {4};
		\node  (60) at (-5, 1.5) {4};
		\node  (61) at (-6.5, 1.25) {};
		\node  (62) at (-0.5, 1.25) {};
		\node  (63) at (-0.5, 1.75) {};
		\node  (64) at (-5.5, 1.75) {};
		\node  (65) at (-5.5, 1.25) {};
		\node  (70) at (-4.5, 1.75) {};
		\node  (71) at (-4.5, 1.25) {};
		\node  (72) at (-3.5, 1.75) {};
		\node  (73) at (-3.5, 1.25) {};
		\node  (74) at (-4, 1.5) {2};
		\node  (75) at (-3, 1.5) {2};
		\node  (76) at (-2.5, 1.75) {};
		\node  (77) at (-2.5, 1.25) {};
		\node  (78) at (-2, 1.5) {0};
		\node  (79) at (-1.5, 1.75) {};
		\node  (80) at (-1.5, 1.25) {};
		\node  (81) at (-1, 1.5) {0};
		\node  (82) at (-7, 2.5) {$X$};
		\node  (83) at (-7, 1.5) {$A$};
		\node  (84) at (-7, 0.5) {$Y$};
		\node  (85) at (-6.5, 0.75) {};
		\node  (86) at (-6, 0.5) {3};
		\node  (89) at (-5, 0.5) {3};
		\node  (90) at (-6.5, 0.25) {};
		\node  (91) at (-0.5, 0.25) {};
		\node  (92) at (-0.5, 0.75) {};
		\node  (93) at (-5.5, 0.75) {};
		\node  (94) at (-5.5, 0.25) {};
		\node  (99) at (-4.5, 0.75) {};
		\node  (100) at (-4.5, 0.25) {};
		\node  (101) at (-3.5, 0.75) {};
		\node  (102) at (-3.5, 0.25) {};
		\node  (103) at (-4, 0.5) {2};
		\node  (104) at (-3, 0.5) {2};
		\node  (105) at (-2.5, 0.75) {};
		\node  (106) at (-2.5, 0.25) {};
		\node  (107) at (-2, 0.5) {5};
		\node  (108) at (-1.5, 0.75) {};
		\node  (109) at (-1.5, 0.25) {};
		\node  (110) at (-1, 0.5) {5};
		\node  (111) at (1, 2.75) {};
		\node  (112) at (1.5, 2.5) {9};
		\node  (115) at (2.5, 2.5) {3};
		\node  (116) at (1, 2.25) {};
		\node  (117) at (7, 2.25) {};
		\node  (118) at (7, 2.75) {};
		\node  (119) at (2, 2.75) {};
		\node  (120) at (2, 2.25) {};
		\node  (125) at (3, 2.75) {};
		\node  (126) at (3, 2.25) {};
		\node  (127) at (4, 2.75) {};
		\node  (128) at (4, 2.25) {};
		\node  (129) at (3.5, 2.5) {4};
		\node  (130) at (4.5, 2.5) {2};
		\node  (131) at (5, 2.75) {};
		\node  (132) at (5, 2.25) {};
		\node  (133) at (5.5, 2.5) {8};
		\node  (134) at (6, 2.75) {};
		\node  (135) at (6, 2.25) {};
		\node  (136) at (6.5, 2.5) {5};
		\node  (137) at (1, 1.75) {};
		\node  (138) at (1.5, 1.5) {5};
		\node  (141) at (2.5, 1.5) {1};
		\node  (142) at (1, 1.25) {};
		\node  (143) at (7, 1.25) {};
		\node  (144) at (7, 1.75) {};
		\node  (145) at (2, 1.75) {};
		\node  (146) at (2, 1.25) {};
		\node  (151) at (3, 1.75) {};
		\node  (152) at (3, 1.25) {};
		\node  (153) at (4, 1.75) {};
		\node  (154) at (4, 1.25) {};
		\node  (155) at (3.5, 1.5) {2};
		\node  (156) at (4.5, 1.5) {0};
		\node  (157) at (5, 1.75) {};
		\node  (158) at (5, 1.25) {};
		\node  (159) at (5.5, 1.5) {4};
		\node  (160) at (6, 1.75) {};
		\node  (161) at (6, 1.25) {};
		\node  (162) at (6.5, 1.5) {3};
		\node  (163) at (0.5, 2.5) {$X$};
		\node  (164) at (0.5, 1.5) {$A$};
		\node  (165) at (0.5, 0.5) {$Y$};
		\node  (166) at (1, 0.75) {};
		\node  (167) at (1.5, 0.5) {9};
		\node  (170) at (2.5, 0.5) {8};
		\node  (171) at (1, 0.25) {};
		\node  (172) at (7, 0.25) {};
		\node  (173) at (7, 0.75) {};
		\node  (174) at (2, 0.75) {};
		\node  (175) at (2, 0.25) {};
		\node  (180) at (3, 0.75) {};
		\node  (181) at (3, 0.25) {};
		\node  (182) at (4, 0.75) {};
		\node  (183) at (4, 0.25) {};
		\node  (184) at (3.5, 0.5) {5};
		\node  (185) at (4.5, 0.5) {4};
		\node  (186) at (5, 0.75) {};
		\node  (187) at (5, 0.25) {};
		\node  (188) at (5.5, 0.5) {3};
		\node  (189) at (6, 0.75) {};
		\node  (190) at (6, 0.25) {};
		\node  (191) at (6.5, 0.5) {2};
	\end{pgfonlayer}
	\begin{pgfonlayer}{edgelayer}
		\draw (5.center)
			 to (0.center)
			 to (7.center)
			 to (6.center)
			 to cycle;
		\draw (8.center) to (9.center);
		\draw (44.center) to (45.center);
		\draw (46.center) to (47.center);
		\draw (51.center) to (50.center);
		\draw (54.center) to (53.center);
		\draw (61.center) to (56.center);
		\draw (56.center) to (63.center);
		\draw (63.center) to (62.center);
		\draw (62.center) to (61.center);
		\draw (64.center) to (65.center);
		\draw (70.center) to (71.center);
		\draw (72.center) to (73.center);
		\draw (77.center) to (76.center);
		\draw (80.center) to (79.center);
		\draw (90.center) to (85.center);
		\draw (85.center) to (92.center);
		\draw (92.center) to (91.center);
		\draw (91.center) to (90.center);
		\draw (93.center) to (94.center);
		\draw (99.center) to (100.center);
		\draw (101.center) to (102.center);
		\draw (106.center) to (105.center);
		\draw (109.center) to (108.center);
		\draw (116.center) to (111.center);
		\draw (111.center) to (118.center);
		\draw (118.center) to (117.center);
		\draw (117.center) to (116.center);
		\draw (119.center) to (120.center);
		\draw (125.center) to (126.center);
		\draw (127.center) to (128.center);
		\draw (132.center) to (131.center);
		\draw (135.center) to (134.center);
		\draw (142.center) to (137.center);
		\draw (137.center) to (144.center);
		\draw (144.center) to (143.center);
		\draw (143.center) to (142.center);
		\draw (145.center) to (146.center);
		\draw (151.center) to (152.center);
		\draw (153.center) to (154.center);
		\draw (158.center) to (157.center);
		\draw (161.center) to (160.center);
		\draw (171.center) to (166.center);
		\draw (166.center) to (173.center);
		\draw (173.center) to (172.center);
		\draw (172.center) to (171.center);
		\draw (174.center) to (175.center);
		\draw (180.center) to (181.center);
		\draw (182.center) to (183.center);
		\draw (187.center) to (186.center);
		\draw (190.center) to (189.center);
	\end{pgfonlayer}
\end{tikzpicture}
    \caption{Example of a shuffled read (left) and a shuffled write (right).}
    \label{fig:shuffles}
\end{figure}

\section{Range Minimum Tree}\label{sec:rangeminimumtree}
Let $A$ be an array of $n$ $w$-bit integers and assume for simplicity that $n$ is a power of two. The \emph{range minimum tree} $T$ is the perfectly balanced rooted binary tree over $A$ such that the $i$th leaf corresponds to the $i$th entry in $A$. We associate each node $v$ in $T$ with a \emph{weight}, denoted $\weight(v)$. If $v$ is a leaf, the weight is the value represented by the corresponding entry of $A$, and if $v$ is an internal node, the weight is the minimum of the weights of the descendant leaves. Note that $T$ has height $h = O(\log n)$ and uses $O(n)$ space. See Figure \ref{fig:rmq-query}.

For nodes $i,j\in T$, let $\lca(i,j)$ denote the lowest common ancestor of $i$ and $j$. 
To answer an $\rmq(i,j)$ query, we traverse the path from $i$ to $\lca(i,j)$ and from $j$ to $\lca(i,j)$ and return the minimum weight of  $i$, $j$, and of the nodes hanging off to the right and left of these paths (except for the children of $\lca(i,j)$), respectively (see Figure~\ref{fig:rmq-query}). To perform an $\update(i,\alpha)$, we traverse the path from $i$ to the root. At leaf $i$, we set the weight to be $\alpha$, and at each internal node $v$, we set the weight to be the minimum of the weights of the two children of $v$. If we do not modify the weight of a node at some node in the traversal, we may stop since no weights need to be updated on the remaining path. See Figure~\ref{fig:update-query}. Both operations traverse paths of $O(\log n)$ length and use constant time at each node. 
Hence, both operations use $O(\log n)$ time. 

We introduce the following \emph{node sequences} to implement the range minimum tree on the UWRAM efficiently. Let $i$ be an index in $A$ corresponding to the $i$th leaf in $T$, and let $p$ be the path of nodes from $i$ to the root in $T$. The \emph{path sequence} is the sequence of nodes on the path $p$. Define the \emph{left sequence} to be the sequence of nodes that are hanging off to the left of $p$, i.e., a node $v$ is in the left sequence if it is the left child of a node on $p$ and is not on $p$ itself. Similarly, define the \emph{right sequence} and the \emph{off-path sequence} to be the sequence of nodes to the right of $p$ and the sequence of nodes to the left or right of $p$, respectively.
All node sequences are ordered from $i$ to the root in order of increasing height. See Figures \ref{fig:rmq-query} and \ref{fig:update-query}.

We can use the node sequences to describe the traversed nodes during the $\rmq$ and $\update$ operations on the range minimum tree. Consider an $\rmq(i,j)$ with $u = \lca(i,j)$ of depth $d$. Then $\rmq(i,j)$ is the minimum of the weights of the leaves $i$ and $j$ and of the nodes of depth $>d+1$ on the right sequence of $i$ and the left sequence of $j$. Next, consider an $\update(i, \alpha)$ and let $i^p_0, \ldots, i^p_{h-1}$ and $i^o_0, \ldots, i^o_{h-2}$ be the path and off-path sequence, respectively, for $i$. Let $\weight(\cdot)$ and $\weight'(\cdot)$ denote the weight of nodes in $T$ before and after the $\update$. Recall that only the nodes on the path sequence may change. We have that $\weight'(i^p_0) = \alpha$ and $\weight'(i^p_j) = \min(\weight'(i^p_{j-1}, \weight(i^o_{j-1})))$ for $0 < j < h$. If we unfold the recursion, it follows that 

\begin{equation}\label{eq:updaterecursion}
	\weight'(i^p_j) = \min(\weight(i^o_{j-1}), \ldots,  \weight(i^o_{0}), \alpha) \qquad \text{for $0 \leq j < h$.} 
\end{equation}
In other words, the new weights of the nodes on the path sequence are the prefix minimums of the sequence $\weight(i^o_{h-2}), \ldots, \weight(i^o_{0}), \alpha$. 

\begin{figure}[t]
    \centering
    \scalebox{1}{
    \begin{tikzpicture}
	\begin{pgfonlayer}{nodelayer}
		\node [style=Empty, pattern=north east lines,pattern color=black!30] (0) at (0, 2) {1};
		\node  (1) at (0, 2.5) {$v_1$};
		\node [style=Empty, pattern=north east lines,pattern color=black!30] (2) at (-3, 1.25) {1};
		\node [style=Empty, pattern=north east lines,pattern color=black!30] (3) at (2.75, 1) {2};
		\node [style=Empty, pattern=north east lines,pattern color=black!30] (4) at (-4.75, 0.25) {1};
		\node [style=grey fill] (5) at (-1.75, 0.25) {3};
		\node [style=grey fill] (6) at (1.25, 0.25) {2};
		\node [style=Empty, pattern=north east lines,pattern color=black!30] (7) at (4.25, 0.25) {6};
		\node [style=Empty, pattern=north east lines,pattern color=black!30] (8) at (-5.5, -0.75) {1};
		\node [style=grey fill] (9) at (-4, -0.75) {7};
		\node [style=Empty] (10) at (-2.5, -0.75) {5};
		\node [style=Empty] (11) at (-1, -0.75) {3};
		\node [style=Empty] (12) at (0.5, -0.75) {2};
		\node [style=Empty] (13) at (2, -0.75) {4};
		\node [style=Empty,pattern=north east lines,pattern color=black!30] (14) at (3.5, -0.75) {8};
		\node [style=Empty] (15) at (5, -0.75) {6};
		\node  (16) at (-5.5, -0.25) {$v_8$};
		\node  (17) at (-4, -0.25) {$v_9$};
		\node  (18) at (-2.5, -0.25) {$v_{10}$};
		\node  (19) at (-1, -0.25) {$v_{11}$};
		\node  (20) at (0.5, -0.25) {$v_{12}$};
		\node  (21) at (2, -0.25) {$v_{13}$};
		\node  (22) at (3.5, -0.25) {$v_{14}$};
		\node  (23) at (5, -0.25) {$v_{15}$};
		\node  (24) at (-4.5, 0.75) {$v_4$};
		\node  (25) at (-1.75, 0.75) {$v_5$};
		\node  (26) at (1.25, 0.75) {$v_6$};
		\node  (27) at (4.25, 0.75) {$v_7$};
		\node  (28) at (-2.75, 1.75) {$v_2$};
		\node  (29) at (2.75, 1.5) {$v_3$};
		\node  (30) at (-5, -2.75) {$i$=1};
		\node  (31) at (4, -2.75) {$j$=12};
		\node [align=left] (33) at (0, -3.25) {Right sequence of $v_{17}$: $v_9, v_5, v_3$};
		\node [align=left] (34) at (0, -3.75) {Left sequence of $v_{29}$: $v_{28}, v_{6}, v_2$};
		\node [style=Empty] (35) at (-5.75, -1.75) {5};
		\node [style=Empty, pattern=north east lines,pattern color=black!30] (36) at (-5, -1.75) {1};
		\node [style=Empty] (37) at (-4.25, -1.75) {9};
		\node [style=Empty] (38) at (-3.5, -1.75) {7};
		\node [style=Empty] (39) at (-2.75, -1.75) {5};
		\node [style=Empty] (40) at (-2, -1.75) {6};
		\node [style=Empty] (41) at (-1.25, -1.75) {3};
		\node [style=Empty] (42) at (-0.5, -1.75) {4};
		\node [style=Empty] (43) at (0.25, -1.75) {4};
		\node [style=Empty] (44) at (1, -1.75) {2};
		\node [style=Empty] (45) at (1.75, -1.75) {6};
		\node [style=Empty] (46) at (2.5, -1.75) {4};
		\node [style=grey fill] (47) at (3.25, -1.75) {9};
		\node [style=Empty, pattern=north east lines,pattern color=black!30] (48) at (4, -1.75) {8};
		\node [style=Empty] (49) at (4.75, -1.75) {6};
		\node [style=Empty] (50) at (5.5, -1.75) {6};
		\node  (51) at (-5.75, -2.25) {$v_{16}$};
		\node  (52) at (-5, -2.25) {$v_{17}$};
		\node  (53) at (-4.25, -2.25) {$v_{18}$};
		\node  (54) at (-3.5, -2.25) {$v_{19}$};
		\node  (55) at (-2.75, -2.25) {$v_{20}$};
		\node  (56) at (-2, -2.25) {$v_{21}$};
		\node  (57) at (-1.25, -2.25) {$v_{22}$};
		\node  (58) at (-0.5, -2.25) {$v_{23}$};
		\node  (59) at (0.25, -2.25) {$v_{24}$};
		\node  (60) at (1, -2.25) {$v_{25}$};
		\node  (61) at (1.75, -2.25) {$v_{26}$};
		\node  (62) at (2.5, -2.25) {$v_{27}$};
		\node  (63) at (3.25, -2.25) {$v_{28}$};
		\node  (64) at (4, -2.25) {$v_{29}$};
		\node  (65) at (4.75, -2.25) {$v_{30}$};
		\node  (66) at (5.5, -2.25) {$v_{31}$};
	\end{pgfonlayer}
	\begin{pgfonlayer}{edgelayer}
		\draw [style=Right] (4) to (8);
		\draw [style=Right] (4) to (9);
		\draw [style=Right] (5) to (10);
		\draw [style=Right] (5) to (11);
		\draw [style=Right] (6) to (12);
		\draw [style=Right] (6) to (13);
		\draw [style=Right] (7) to (14);
		\draw [style=Right] (7) to (15);
		\draw [style=Right] (3) to (6);
		\draw [style=Right] (3) to (7);
		\draw [style=Right] (2) to (4);
		\draw [style=Right] (2) to (5);
		\draw [style=Right] (0) to (2);
		\draw [style=Right] (0) to (3);
		\draw [style=Right] (8) to (35);
		\draw [style=Right] (8) to (36);
		\draw [style=Right] (9) to (37);
		\draw [style=Right] (9) to (38);
		\draw [style=Right] (10) to (39);
		\draw [style=Right] (10) to (40);
		\draw [style=Right] (11) to (41);
		\draw [style=Right] (11) to (42);
		\draw [style=Right] (12) to (43);
		\draw [style=Right] (12) to (44);
		\draw [style=Right] (13) to (45);
		\draw [style=Right] (13) to (46);
		\draw [style=Right] (14) to (47);
		\draw [style=Right] (14) to (48);
		\draw [style=Right] (15) to (49);
		\draw [style=Right] (15) to (50);
	\end{pgfonlayer}
\end{tikzpicture}
}
\caption{An example array $A$, with its range minimum tree. For a query $\rmq(1, 12)$, we illustrate with grey circles the right sequence vertices of $i=1$ and the left sequence vertices of $j=12$ of depth greater than $d+1 = 1$. We draw with dashed lines the nodes of the path sequences of $i$ and $j$.}
\label{fig:rmq-query}
\end{figure}

\begin{figure}
    \centering
    \scalebox{1}{
    \begin{tikzpicture}
	\begin{pgfonlayer}{nodelayer}
		\node [style=Empty, pattern=north east lines,pattern color=black!30] (0) at (0, 2) {1};
		\node  (1) at (0, 2.5) {$v_1$};
		\node [style=Empty, pattern=north east lines,pattern color=black!30] (2) at (-3, 1.25) {1};
		\node [style=grey fill] (3) at (2.75, 1) {2};
		\node [style=grey fill] (4) at (-4.75, 0.25) {1};
		\node [style=Empty, pattern=north east lines,pattern color=black!30] (5) at (-1.75, 0.25) {3};
		\node [style=Empty] (6) at (1.25, 0.25) {2};
		\node [style=Empty] (7) at (4.25, 0.25) {6};
		\node [style=Empty] (8) at (-5.5, -0.75) {1};
		\node [style=Empty] (9) at (-4, -0.75) {7};
		\node [style=Empty, pattern=north east lines,pattern color=black!30] (10) at (-2.5, -0.75) {5};
		\node [style=grey fill] (11) at (-1, -0.75) {3};
		\node [style=Empty] (12) at (0.5, -0.75) {2};
		\node [style=Empty] (13) at (2, -0.75) {4};
		\node [style=Empty] (14) at (3.5, -0.75) {8};
		\node [style=Empty] (15) at (5, -0.75) {6};
		\node  (16) at (-5.5, -0.25) {$v_8$};
		\node  (17) at (-4, -0.25) {$v_9$};
		\node  (18) at (-2.5, -0.25) {$v_{10}$};
		\node  (19) at (-1, -0.25) {$v_{11}$};
		\node  (20) at (0.5, -0.25) {$v_{12}$};
		\node  (21) at (2, -0.25) {$v_{13}$};
		\node  (22) at (3.5, -0.25) {$v_{14}$};
		\node  (23) at (5, -0.25) {$v_{15}$};
		\node  (24) at (-4.5, 0.75) {$v_4$};
		\node  (25) at (-1.75, 0.75) {$v_5$};
		\node  (26) at (1.25, 0.75) {$v_6$};
		\node  (27) at (4.25, 0.75) {$v_7$};
		\node  (28) at (-2.75, 1.75) {$v_2$};
		\node  (29) at (2.75, 1.5) {$v_3$};
		\node  (30) at (-2, -2.75) {$i$=5};
		\node [align=left] (33) at (0, -3.25) {Path sequence of $v_{21}$: $v_{21}, v_{10}, v_5, v_2, v_1$};
		\node [align=left] (34) at (0, -3.75) {Off-path sequence of of $v_{21}$: $v_{20}, v_{11}, v_4, v_3$};
		\node [style=Empty] (35) at (-5.75, -1.75) {5};
		\node [style=Empty] (36) at (-5, -1.75) {1};
		\node [style=Empty] (37) at (-4.25, -1.75) {9};
		\node [style=Empty] (38) at (-3.5, -1.75) {7};
		\node [style=grey fill] (39) at (-2.75, -1.75) {5};
		\node [style=Empty, pattern=north east lines,pattern color=black!30] (40) at (-2, -1.75) {6};
		\node [style=Empty] (41) at (-1.25, -1.75) {3};
		\node [style=Empty] (42) at (-0.5, -1.75) {4};
		\node [style=Empty] (43) at (0.25, -1.75) {4};
		\node [style=Empty] (44) at (1, -1.75) {2};
		\node [style=Empty] (45) at (1.75, -1.75) {6};
		\node [style=Empty] (46) at (2.5, -1.75) {4};
		\node [style=Empty] (47) at (3.25, -1.75) {9};
		\node [style=Empty] (48) at (4, -1.75) {8};
		\node [style=Empty] (49) at (4.75, -1.75) {6};
		\node [style=Empty] (50) at (5.5, -1.75) {6};
		\node  (51) at (-5.75, -2.25) {$v_{16}$};
		\node  (52) at (-5, -2.25) {$v_{17}$};
		\node  (53) at (-4.25, -2.25) {$v_{18}$};
		\node  (54) at (-3.5, -2.25) {$v_{19}$};
		\node  (55) at (-2.75, -2.25) {$v_{20}$};
		\node  (56) at (-2, -2.25) {$v_{21}$};
		\node  (57) at (-1.25, -2.25) {$v_{22}$};
		\node  (58) at (-0.5, -2.25) {$v_{23}$};
		\node  (59) at (0.25, -2.25) {$v_{24}$};
		\node  (60) at (1, -2.25) {$v_{25}$};
		\node  (61) at (1.75, -2.25) {$v_{26}$};
		\node  (62) at (2.5, -2.25) {$v_{27}$};
		\node  (63) at (3.25, -2.25) {$v_{28}$};
		\node  (64) at (4, -2.25) {$v_{29}$};
		\node  (65) at (4.75, -2.25) {$v_{30}$};
		\node  (66) at (5.5, -2.25) {$v_{31}$};
	\end{pgfonlayer}
	\begin{pgfonlayer}{edgelayer}
		\draw [style=Right] (4) to (8);
		\draw [style=Right] (4) to (9);
		\draw [style=Right] (5) to (10);
		\draw [style=Right] (5) to (11);
		\draw [style=Right] (6) to (12);
		\draw [style=Right] (6) to (13);
		\draw [style=Right] (7) to (14);
		\draw [style=Right] (7) to (15);
		\draw [style=Right] (3) to (6);
		\draw [style=Right] (3) to (7);
		\draw [style=Right] (2) to (4);
		\draw [style=Right] (2) to (5);
		\draw [style=Right] (0) to (2);
		\draw [style=Right] (0) to (3);
		\draw [style=Right] (8) to (35);
		\draw [style=Right] (8) to (36);
		\draw [style=Right] (9) to (37);
		\draw [style=Right] (9) to (38);
		\draw [style=Right] (10) to (39);
		\draw [style=Right] (10) to (40);
		\draw [style=Right] (11) to (41);
		\draw [style=Right] (11) to (42);
		\draw [style=Right] (12) to (43);
		\draw [style=Right] (12) to (44);
		\draw [style=Right] (13) to (45);
		\draw [style=Right] (13) to (46);
		\draw [style=Right] (14) to (47);
		\draw [style=Right] (14) to (48);
		\draw [style=Right] (15) to (49);
		\draw [style=Right] (15) to (50);
	\end{pgfonlayer}
\end{tikzpicture}
}
    \caption{Illustration of an $\update(5,\alpha)$ query. We draw with grey circles the off-path sequence vertices, and with dashed lines the path sequence ones. Note how the value of $v_{10}$ should be $\min(v_{20}, \alpha)$, the value of $v_5$ should be $\min(v_{10}, v_{11})=\min(v_{11}, v_{20}, \alpha)$, and the value of $v_2$ should be $\min(v_4, v_5)=\min(v_4, v_{11}, v_{20}, \alpha)$, and so on. These values are the prefix minimum of the grey vertices and $\alpha$.}
    \label{fig:update-query}
\end{figure}

\section{From Range Minimum Queries to Prefix Minimum on the UWRAM} 
In this section, we show that any UWRAM data structure that supports prefix minimum computations on word sequences of length $O(\log n)$ implies a UWRAM solution for the range minimum query problem.   

\begin{theorem}\label{thm:reduction} 
Let $A$ be an array of $n$ $w$-bit integers, and let $t(\ell)$ be the time to compute $\pmini$ on a word sequence of length at most $\ell = O(w)$. Then, we can construct a data structure on the UWRAM in $O(n)$ time that supports $\rmq$ and $\update$ in $O(1 + t(\log n))$ time and uses $O(n/\log n)$ words of space in addition to $A$.
\end{theorem}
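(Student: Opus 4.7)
The plan is to implement the range minimum tree $T$ of Section~\ref{sec:rangeminimumtree} on the UWRAM using the standard heap layout: store $T$ as an array of $2n-1$ words so that the node at position $k$ has children at positions $2k$ and $2k+1$ and parent at $\floor{k/2}$. Building the tree bottom-up in this layout takes $O(n)$ time and $O(n)$ space. Since $w \geq \log n$, the height $h = O(\log n)$ is $O(w)$, so every path, left, right, and off-path sequence has length $O(w)$ and fits in a constant number of ultrawords. Using the word-sequence toolkit from Section~\ref{sec:uwram_model} together with precomputed constants, I would first show that given the index of a leaf, the sequence of heap indices of its ancestors (the path sequence) and of the siblings of those ancestors (the off-path sequence, and from it the left and right sequences via their low bits) can be produced as word sequences in $O(1)$ time.

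Given that subroutine, $\update(i,\alpha)$ proceeds as follows. Construct the path and off-path index sequences of $i$, do one scattered read to fetch the weights of the off-path nodes, concatenate $\alpha$ as the lowest-indexed component, and invoke the prefix minimum algorithm once on this word sequence of length $h = O(\log n)$ in time $t(\log n)$. By Equation~\eqref{eq:updaterecursion} the output is exactly the new weight sequence of the path nodes, which we write to memory with a single scattered write at the path indices. The ``stop when nothing changes'' optimisation is not needed for correctness; rewriting all $h$ positions is already within the bound.

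For $\rmq(i,j)$, compute the two path sequences of $i$ and $j$. The depth $d$ of $\lca(i,j)$ is the lowest position at which the two heap-index sequences disagree, which is located in constant time by one componentwise comparison and one prefix computation on the resulting binary word sequence (or via the $\compress$ operation). Compute the right sequence of $i$ and the left sequence of $j$ from the path sequences by toggling the low bit of each ancestor, and use scattered reads to obtain their weights. Use a precomputed depth word sequence and a componentwise comparison to build a binary mask that keeps exactly the components of depth $>d+1$; apply a componentwise extract and fill the discarded positions with a precomputed $\infty$. Concatenate the two masked sequences together with $A[i]$ and $A[j]$ into a single word sequence of length $O(\log n)$, invoke $\pmini$ once, and return its highest component. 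The total time is $O(1 + t(\log n))$.

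The main obstacle is the constant-time construction of the ancestor index sequence $\angle{k, \floor{k/2}, \floor{k/4}, \ldots, 1}$ from a single heap index $k$, since this asks for componentwise variable-amount shifts which are not in the base instruction set. I would realise it from the precomputed constants allowed by the model: a word sequence giving the shift amounts $0,1,\ldots,h-1$ together with suitable masks lets the sequence of right-shifted copies of $k$ be assembled in constant time using the standard word-level-parallelism techniques of~\cite{Hagerup1998, BGS2022, BGS2024} cited in Section~\ref{sec:uwram_model}; alternatively, since the $j$th ancestor index is simply the top $h-j$ bits of $k$, the full triangular bit pattern can be produced in $O(1)$ time with one $\spread$, one prefix computation over a precomputed mask, and a $\compress$ back. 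Once this single primitive is in place, all the remaining steps are constant numbers of scattered memory accesses, componentwise operations, and one call to the given $\pmini$ algorithm, giving Theorem~\ref{thm:reduction}.
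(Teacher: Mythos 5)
Your route differs from the paper's in a way that creates a genuine gap. The paper never computes node sequences on the fly: it \emph{precomputes and stores} the path, left, right, off-path and depth sequences for every index (accepting $O(n\log n)$ space), and then recovers $O(n)$ space with one level of indirection --- partitioning $A$ into blocks of $\log n$ entries, keeping the block minima in an array of length $n/\log n$, building the expensive structure only on that array, and answering the within-block parts of a query by running $\pmini$ directly on the block (itself a word sequence of length $\log n$). Your proposal replaces all of this with a heap layout plus a constant-time construction of the ancestor-index sequence $\angle{k, \floor{k/2}, \ldots, 1}$ from a single leaf index $k$, and everything --- both the $O(n)$ space bound and the $O(1+t(\log n))$ time bound --- now rests on that one primitive. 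You correctly flag it as the main obstacle, but neither of your two sketches closes it. Componentwise variable-amount shifts are not among the word-sequence operations the paper establishes (componentwise addition, comparison, extract, concatenate, split, compress, spread); the standard constant-time realisations of per-field variable shifts use multiplication, which the restricted UWRAM does not provide on ultrawords, so an appeal to ``standard word-level-parallelism techniques'' does not suffice here. Your $\spread$/$\compress$ alternative fares no better: the triangular pattern needs $\Theta(\log^2 n)$ one-bit components, and since $w$ may be as small as $\log n$, that is a word sequence of length up to $w^2/w\cdot\log n = \Theta(w\log n)$ words, which does not fit in $O(1)$ ultrawords, so the intermediate prefix computation is not constant time. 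Falling back to computing the $h$ ancestors sequentially costs $O(\log n)$ time, and storing them per leaf costs $O(n\log n)$ space; either way a claimed bound is lost.

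The rest of your argument is sound and close in spirit to the paper's: the $\update$ handling via Equation~\eqref{eq:updaterecursion} (scattered read of off-path weights, one $\pmini$ call, scattered write along the path) is exactly the paper's, and your $\rmq$ handling (depth masks built by componentwise comparison against $d+1$, extraction, one minimum over the concatenation) matches as well, modulo a small slip --- the depth of $\lca(i,j)$ corresponds to the lowest position at which the two ancestor sequences \emph{agree}, not disagree. To repair the proof without solving the variable-shift problem, adopt the paper's device: store the sequences explicitly and buy back the $\log n$ space factor with the block-array indirection.
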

 In the next section, we will show how to compute $\pmini$ on a word sequence of length $\ell = O(w)$ in $O(\log \log \ell)$ time, implying the main result of Theorem~\ref{thm:main}.

\subsection{Range Minimum Tree on the UWRAM}\label{sec:uwramrangeminimumtree} 
We first show a simple direct implementation of the range minimum tree that achieves the $\rmq$ and $\update$ time bounds of Theorem~\ref{thm:reduction} but with $O(n \log n)$ space and preprocessing time. 

\paragraph{Data Structure} Our data structure consists of the input array $A$, the range minimum tree $T$ over $A$, and a data structure that supports lowest common ancestor queries on $T$. This data structure can be implemented in linear space and preprocessing time to support $\lca$ queries in constant time~\cite{harel1984fast, AGKR2004, BFC2000}. Furthermore, for each index $i$ in $A$, we store the path sequence, the left path sequence, the right path sequence, and the off-path sequence. The sequences are stored as sequences of pointers to the nodes, and together with the left and right sequences, we also store the sequence of depths of the nodes in the sequence.   

The array, the range minimum tree, and the lowest common ancestor data structure use $O(n)$ space. Each of the $O(n)$ sequences uses $O(\log n)$ space. In total, we use $O(n \log n)$ space and preprocessing time. 

\paragraph{Range Minimum Queries} 
To answer a $\rmq(i,j)$ query, compute $u = \lca(i, j)$ and the depth $d$ of $u$. Read the left and right path sequences of $i$ and $j$ into word sequences denoted $I^r$ and $J^l$, and their corresponding depth sequences into word sequences denoted $DI^r$ and $DJ^l$. Then, construct masks $MI^r$ and $MJ^l$ containing $1$s in the positions in $DI^r$ and $DJ^l$ that are greater than $d+1$ and use these to extract the prefixes $\hat{I^r}$ and $\hat{J^l}$ of $I^r$ and $J^l$, respectively, that contain the nodes that have depth greater than $d+1$. Finally, compute the weights $W\hat{I^r}$ and $W\hat{J^l}$ of the nodes in $\hat{I^r}$ and $\hat{J^l}$ using two scattered reads, and return $\mini(W\hat{I^r} \cdot W\hat{J^l})$. The masks $MI^r$ and $MJ^l$ can be computed in the following way: first we load a word sequence $A$ that has a predefined address $a$ in all the components. We write $d+1$ to the memory address $a$, and then use a scattered read with $A$ to get a word sequence $P$ with all the components set to $d+1$ (this is a general operation that we use to copy a specific value to all components of a word sequence). Finally we compare $DI^r$ to $P$ to get $MI^r$, and repeat with $DI^l$ to obtain $MI^l$. Thus, all operations take constant time, except $\mini$, which uses $O(t(\log n))$ time. In total, we use $O(1 + t(\log n))$~time.

\paragraph{Updates}
To implement $\update(i, \alpha)$, we first set $\weight(i) = \alpha$. We read the off-path sequence into a word sequence, denoted $I^o$, and then compute the corresponding sequence of weights $WI^o$ using a scattered read. We then compute $P = \pmini(WI^o \cdot \alpha)$ and perform a scattered write with $I^p$ and $P$. All operations use constant time, except $\pmini$, which uses $O(t(\log n))$ time. In total, we use $O(1 + t(\log n))$. 

\medskip 
In summary, the UWRAM range minimum tree uses $O(n\log n)$ space and preprocessing time and supports $\rmq$ and $\update$ in $O(1 + t(\log n))$ time.

%
%

\subsection{Reducing Space}\label{subsec:reducing_space}
We now improve the space and preprocessing time to $O(n)$ by using a single level of indirection. 

\paragraph{Data Structure}
We partition $A$ into \emph{blocks} $B_0, \ldots, B_{n/\log n-1}$ each of $\log n$ consecutive entries. We store the minimum of each block in a \emph{block array} $B$ of length $n/\log n$ and construct the UWRAM range minimum data structure from Section~\ref{sec:uwramrangeminimumtree} on $B$. This uses $O((n/\log n) \cdot \log (n /\log n)) = O(n)$ space and preprocessing time.

\paragraph{Range Minimum Queries} 
To answer an $\rmq(i,j)$ query there are two cases: 
\begin{description}
	\item[Case 1: $i$ and $j$ are within the same block.] Let $B_k$, where $k = \floor{i/\log n}$ be the block containing $i$ and $j$ and compute the corresponding local indices $i'= i \mod \log n$ and $j' = j \mod \log n$ in $B_k$. We read $B_k$ and return $\mini(B_k\angle{i'},\cdots B_k\angle{j'})$.
		\item[Case 2: $i$ and $j$ are in different blocks.] Let $B_l, B_{l+1}, \ldots, B_r$ be the blocks covering the range from $i$ to $j$ and let $i'$ and $j'$ be the local indices in $B_l$ and $B_r$. We decompose the range into three parts. We compute the minimum in the leftmost block as $l_{\min} = \mini(B_l\angle{i'} \cdots, B_l\angle{\log n -1})$ and the rightmost block  as  $r_{\min} = \mini(B_r\angle{\log n -1}\cdots B_r\angle{j'})$. We then compute the minimum $m_{\min}$ of the middle blocks using the range minimum tree. Finally, we return $\mini(l_{\min}, c_{\min}, r_{\min})$.  
\end{description}

\paragraph{Updates} Consider an $\update(i, \alpha)$ operation. Let $B_k$, where $k =\floor{i/\log n}$ be the block containing $i$, and let $i'= i \mod \log n$ be the local index in $B_k$. We set $B_k\angle{i'} = \alpha$. We then read $B_k$ and compute $b_{\min} = \mini(B_k)$. If $b_{\min}$ differs, we update $T$ with the new value. 

\medskip 
Both operations use constant time except for (prefix) minimum computations and operations on the range minimum tree that take $O(1 + t(\log (n/\log n))) = O(t(\log n))$ time. Hence, the total time is $O(1 + t(\log n))$, with a data structure of linear space and preprocessing time.

\subsection{Further Reducing Space} \label{subsec:heap}
We now show how to further improve the additional space to $O(n/\log n)$ words. The key idea is to use an implicit binary heap layout of the range minimum tree and efficiently implement the queries and updates on the UWRAM in the layout. We first describe the heap layout and how to adapt queries and updates to it in Section~\ref{subsec:heaplayout}. We then show how to efficiently implement the operation on the UWRAM and combine this with indirection as in Section~\ref{subsec:reducing_space} to achieve Theorem~\ref{thm:reduction}.





\subsubsection{Heap Layout and Operations}\label{subsec:heaplayout}
For simplicity in the presentation, we assume that $n$ is a power of two, and if not we pad the input array $A$. We layout the range minimum tree as a binary heap with a single array $H[1..2n-1]$. Each entry in $H$ stores the weight of a node. The original array $A$ is stored at positions $H[n..2n-1]$ and the internal nodes are stored in heap order in positions $H[1..n-1]$ (we use the convention that $H[1]$ stores the root and $H[0]$ is left empty). Throughout this section, we will identify nodes in the range minimum tree with their index in $H$. For any internal node at index $i$, the left and right child are stored at indices $2i$ and $2i+1$, and for any non-root node at index $i$ the parent is stored at $\floor{i/2}$. 

We first give an overview of the high-level idea of how to implement queries and an updates below and then show how to implement them efficiently on the UWRAM in the next subsection.

\paragraph{Minimum Queries} Consider a query $\rmq(i,j)$. We compute nodes sequences in $H$ directly from the binary representation of the leaves at positions $n+i$ and $n+j$. First, we compute the path $p$ sequence for $n+i$, by simply right shifting $n+i$ one bit at the time (corresponding to repeatedly computing parents by integer division by 2). To compute the right sequence for $n+i$ we take all even nodes in $p$ and compute their siblings by adding $1$ (each left child of a node in $p$ is an even node and its right child is at the next index).  Symmetrically, we compute the left sequence for $n+j$ by taking the odd nodes in the path sequence for $n+j$ and subtracting $1$. 

Next, we compute the depth $d$ of $\lca(n+i,n+j)$ in $H$ as the length of the longest common prefix of the binary representation of $n+i$ and $n+j$. We then extract from the right and left sequences all nodes with depth $>d+1$ or equivalently the nodes that are greater than or equal to $2^{d+2}$. Finally, we complete the computation as in Section~\ref{sec:uwramrangeminimumtree}.

\paragraph{Updates} Consider an $\update(i, \alpha)$ operation.  We first compute the path sequence $p$ for $n+i$ as above. We then compute the off-path sequence by flipping the rightmost bit of each node on the off-path sequence except for the root (thus changing nodes to their siblings). Finally, we complete the computation as in Section~\ref{sec:uwramrangeminimumtree}.



\subsubsection{UWRAM Implementation}\label{subsec:UWRAMimplementation}
Now, we show how to implement the above procedures in the UWRAM efficiently. We describe how to compute the node sequences, the depth of the $\lca$ between two nodes, and how to cut the node sequences appropriately.

\paragraph{Node Sequences} Our goal is to compute the path sequence of a node $i$ given by the word sequence $\langle 1,\ldots,\allowbreak\floor{i/4},\allowbreak\floor{i/2},\allowbreak i\rangle$. On a UWRAM with component-wise multiplication this is straightforward to do in constant time, but more involved on the restricted UWRAM. 

In addition to $H$, we store two arrays $H_A[1..2n]$, $H_B[1..2n]$ defined as follows:
\begin{itemize}
    \item For all odd $i\in [1..2n]$, let $k$ be the largest integer such that $2^ki\leq 2n$. Then, $$H_A[i]=H_A[(i\ll 1)]=H_A[(i\ll 2)]=\ldots = H_A[(i\ll k)]=i.$$
    \item For all even $i\in H_B[1..2n]$, let $k$ be the largest integer such that $2^ki+2^{k}-1\leq 2n$. Then, $$H_B[i]=H_B[(i\ll 1)\mid 1]=H_B[(i\ll 2)\mid 11_2]=\ldots  =H_B[(i\ll k)| 1^k_{\hphantom{0}2}]=i.$$
\end{itemize}

These arrays allow us to remove trailing zeros or ones from the binary representation of a number, by reading from $H_A$ or $H_B$ accordingly. For example, the number $12=1100_2$ has two trailing zeros in its binary representation, and then reading from $H_A$ we obtain $H_A[12]=3=11_2$. Now, to right shift any number $i$ by $k$ bits, we can look at its $(k+1)$-th bit: if it is $1$, then we set the bottom k bits to 0 and then read from $H_A$. Otherwise, we set the bottom $k$ bits to $1$ and read from $H_B$. This returns the value $(i\gg k)$. See Figure \ref{fig:heap} for an example with $n=8$.

To obtain the path sequence for $i$, copy $i$ to all components of a word sequence $I=\angle{i,i,\ldots,i}$ (as shown in section \ref{sec:uwramrangeminimumtree}). We need to right shift the $j$th component by $j$ bits. On a high level, we proceed as follows: depending on whether the $j$th bit of the $j$th word is $1$ or $0$, we set all other lower bits to $0$s or $1$s, respectively. We split the components between the ones we filled with $0$s and $1$s, and then we use shuffled reads to get from $H_A$ and $H_B$ the values with the lower bits removed.

More precisely, we load two masks $M_1\angle{j}=2^j$ and $M_2\angle{j}=1^{w-j}0^j$. We detect the terms where $I\angle{j}$ has the $j$th bit set to $1$ by a greater-than ($>$) comparison between $I\:\&\:M_1$ and the zero word sequence; we store that in a word sequence $I_O$. We extract those components from $I$ in a word sequence $O$, and extract the other components to a word sequence $E$. We also extract from $M_2$ the components indicated by $I_O$ to a word sequence $M_2'$, and from $\neg M_2$ the opposite components, to $M_2''$. Then, we make the word sequences $O'=O\:\&\:M_2'$ and $E'=E\mid M_2''$, which set the lower $j$th bits of each non-zero component of $O$ and $E$ to $0$ or $1$, respectively. Finally, we use $O'$ as the indices for a shuffled read on $H_A$, and $E'$ as the indices for a shuffled read on $H_B$. For the zero components of $O'$ and $E'$, we read a dummy $0$ value from $H_A[0]$ and $H_B[0]$. Merging with an OR operation both results, we get the path sequence of $i$. 

From the path sequence of $i$ we can obtain the left, right and off-path sequences easily. For the right path sequence, we extract the components with even values and add $1$ to them. For the left path sequence, we extract the components with odd values and subtract $1$. For the off-path sequence, we flip the last bit of all the components by applying an XOR with $1$. See Figure \ref{fig:path-sequences} for an example of the computation of the four sequences.

\paragraph{Lowest Common Ancestors} 
To compute the depth $d$ of $\lca(i,j)$, we compute the XOR of $i$ and $j$ and then compute the index of most significant bit in the result. This uses constant time~\cite{FW1993}.  

Given the depth $d$, we create a word sequence $\angle{2^{d+2},\ldots, 2^{d+2}}$ by copying $2^{d+2}$ to all components of a word sequence. We can now crop the right and left sequences to nodes of depth $\geq d+2$ by extracting from them the nodes that satisfy a $\geq$-comparison to $\angle{2^{d+2},\ldots, 2^{d+2}}$.



\medskip

All of the above operations except for (prefix) minimum computation and thus take $O(1+t(\log n))$. The space and preprocessing is $O(n)$. To reduce the space we use a single level of indirection exactly as in Section~\ref{subsec:reducing_space}. Thus, we partition $A$ into \emph{blocks} $B_0, \ldots, B_{n/\log n-1}$ each of $\log n$ consecutive entries, store the minimum of each block in a block array $B$. We the construct the data structure on $B$ and implement the operations as in Section~\ref{subsec:reducing_space}. This reduces the additional space to $O(n/\log n)$ while maintaining the same time for queries and updates. In summary, we have shown Theorem~\ref{thm:reduction}.




\begin{figure}
    \centering
\begin{tikzpicture}
	\begin{pgfonlayer}{nodelayer}
		\node  (0) at (-0.5, 1.75) {5};
		\node  (1) at (0, 1.75) {1};
		\node  (2) at (0.5, 1.75) {9};
		\node  (3) at (1, 1.75) {7};
		\node  (4) at (1.5, 1.75) {5};
		\node  (5) at (2, 1.75) {6};
		\node  (6) at (2.5, 1.75) {3};
		\node  (7) at (3, 1.75) {4};
		\node  (32) at (-4, 1.75) {1};
		\node  (33) at (-3.5, 1.75) {1};
		\node  (34) at (-3, 1.75) {3};
		\node  (35) at (-2.5, 1.75) {1};
		\node  (36) at (-2, 1.75) {7};
		\node  (37) at (-1.5, 1.75) {5};
		\node  (38) at (-1, 1.75) {3};
		\node  (43) at (-5.5, 1.75) {$H = $};
		\node  (49) at (3.5, 1.75) {$]$};
		\node  (50) at (-5.5, 1.25) {$H_A = $};
		\node  (51) at (-5.5, 0.75) {$H_B = $};
		\node  (54) at (3.5, 1.25) {$]$};
		\node  (55) at (3.5, 0.75) {$]$};
		\node  (57) at (-4, 1.25) {1};
		\node  (58) at (-3.5, 1.25) {1};
		\node  (59) at (-3, 1.25) {3};
		\node  (60) at (-2.5, 1.25) {1};
		\node  (61) at (-2, 1.25) {5};
		\node  (62) at (-1.5, 1.25) {3};
		\node  (63) at (-1, 1.25) {7};
		\node  (64) at (-0.5, 1.25) {1};
		\node  (65) at (0, 1.25) {9};
		\node  (66) at (0.5, 1.25) {5};
		\node  (67) at (1, 1.25) {11};
		\node  (68) at (1.5, 1.25) {3};
		\node  (69) at (2, 1.25) {13};
		\node  (70) at (2.5, 1.25) {7};
		\node  (71) at (3, 1.25) {15};
		\node  (72) at (-4, 0.75) {0};
		\node  (73) at (-3.5, 0.75) {2};
		\node  (74) at (-3, 0.75) {0};
		\node  (75) at (-2.5, 0.75) {4};
		\node  (76) at (-2, 0.75) {2};
		\node  (77) at (-1.5, 0.75) {6};
		\node  (78) at (-1, 0.75) {0};
		\node  (79) at (-0.5, 0.75) {8};
		\node  (80) at (0, 0.75) {4};
		\node  (81) at (0.5, 0.75) {10};
		\node  (82) at (1, 0.75) {2};
		\node  (83) at (1.5, 0.75) {12};
		\node  (84) at (2, 0.75) {6};
		\node  (85) at (2.5, 0.75) {14};
		\node  (86) at (3, 0.75) {0};
		\node  (87) at (-5, 1.75) {$[$};
		\node  (88) at (-5, 1.25) {$[$};
		\node  (89) at (-5, 0.75) {$[$};
		\node  (90) at (-4.5, 1.75) {0};
		\node  (91) at (-4.5, 1.25) {0};
		\node  (92) at (-4.5, 0.75) {0};
		\node  (93) at (-4.5, 2.25) {\tiny $0$};
		\node  (94) at (-4, 2.25) {\tiny $1$};
		\node  (95) at (-3.5, 2.25) {\tiny $2$};
		\node  (96) at (0.5, 2.25) {\tiny $10$};
		\node  (97) at (1, 2.25) {\tiny $11$};
		\node  (98) at (-3, 2.25) {\tiny $3$};
		\node  (99) at (-2.5, 2.25) {\tiny $4$};
		\node  (100) at (-2, 2.25) {\tiny $5$};
		\node  (101) at (-1.5, 2.25) {\tiny $6$};
		\node  (102) at (-1, 2.25) {\tiny $7$};
		\node  (103) at (-0.5, 2.25) {\tiny $8$};
		\node  (104) at (0, 2.25) {\tiny $9$};
		\node  (105) at (1.5, 2.25) {\tiny $12$};
		\node  (106) at (2, 2.25) {\tiny $13$};
		\node  (107) at (2.5, 2.25) {\tiny $14$};
		\node  (108) at (3, 2.25) {\tiny $15$};
	\end{pgfonlayer}
\end{tikzpicture}
    \caption{Example heap and data structure for an array $A=[5, 1, 9, 7, 5, 6, 3, 4]$. Note that we use left-to-right order here, since $H, H_A$ and $H_B$ are arrays.}
    \label{fig:heap}
\end{figure}

\begin{figure}
    \centering
\begin{tikzpicture}
	\begin{pgfonlayer}{nodelayer}
		\node  (0) at (-7, 1.75) {$I = $};
		\node  (1) at (-6.5, 1.75) {$\langle$};
		\node  (2) at (-5.25, 1.75) {$1010$};
		\node  (3) at (-4.25, 1.75) {$1010$};
		\node  (4) at (-3.25, 1.75) {$1010$};
		\node  (5) at (-2.25, 1.75) {$1010$};
		\node  (6) at (-1.25, 1.75) {$1010$};
		\node  (7) at (-0.75, 1.75) {$\rangle$};
		\node  (8) at (-7.25, 0.75) {$I\&M_1=$};
		\node  (9) at (-6.5, 0.25) {$\langle$};
		\node  (10) at (-5.25, 0.25) {0};
		\node  (11) at (-4.25, 0.25) {1};
		\node  (12) at (-3.25, 0.25) {0};
		\node  (13) at (-2.25, 0.25) {1};
		\node  (14) at (-1.25, 0.25) {0};
		\node  (15) at (-0.75, 0.25) {$\rangle$};
		\node  (16) at (-1.25, 2.25) {\tiny $0$};
		\node  (17) at (-2.25, 2.25) {\tiny $1$};
		\node  (18) at (-3.25, 2.25) {\tiny $2$};
		\node  (19) at (-4.25, 2.25) {\tiny $3$};
		\node  (20) at (-5.25, 2.25) {\tiny $4$};
		\node  (21) at (-6, 1.75) {$\ldots$};
		\node  (22) at (-6, 0.75) {$\ldots$};
		\node  (23) at (-0.75, -0.25) {$\rangle$};
		\node  (24) at (-0.75, -2.25) {$\rangle$};
		\node  (25) at (-1.25, -0.25) {0};
		\node  (26) at (-2.25, -0.25) {1010};
		\node  (27) at (-3.25, -0.25) {0};
		\node  (28) at (-4.25, -0.25) {1010};
		\node  (29) at (-7, 0.25) {$I_O=$};
		\node  (30) at (-6.5, -0.25) {$\langle$};
		\node  (31) at (-1.25, -2.25) {1010};
		\node  (32) at (-2.25, -2.25) {0};
		\node  (33) at (-3.25, -2.25) {1010};
		\node  (34) at (-4.25, -2.25) {0};
		\node  (35) at (-7, -2.25) {$E=$};
		\node  (36) at (-7, -0.25) {$O=$};
		\node  (37) at (-5.25, 0.75) {0};
		\node  (38) at (-4.25, 0.75) {1000};
		\node  (39) at (-3.25, 0.75) {0};
		\node  (40) at (-2.25, 0.75) {$0010$};
		\node  (41) at (-1.25, 0.75) {0};
		\node  (42) at (-5.25, -0.25) {0};
		\node  (43) at (-5.25, -2.25) {0};
		\node  (44) at (-0.75, 0.75) {$\rangle$};
		\node  (45) at (-6.5, 0.75) {$\langle$};
		\node  (46) at (-6.5, -2.25) {$\langle$};
		\node  (47) at (-1.25, -0.75) {0};
		\node  (48) at (-2.25, -0.75) {$1110$};
		\node  (49) at (-3.25, -0.75) {0};
		\node  (50) at (-4.25, -0.75) {$1000$};
		\node  (51) at (-5.25, -0.75) {0};
		\node  (52) at (-6.5, -0.75) {$\langle$};
		\node  (53) at (-7, -0.75) {$M_2'=$};
		\node  (54) at (-7, -2.75) {$M_2''=$};
		\node  (55) at (-6.5, -2.75) {$\langle$};
		\node  (56) at (-0.75, -0.75) {$\rangle$};
		\node  (57) at (-1.25, -2.75) {$0000$};
		\node  (58) at (-2.25, -2.75) {0};
		\node  (59) at (-3.25, -2.75) {$0011$};
		\node  (60) at (-4.25, -2.75) {0};
		\node  (61) at (-5.25, -2.75) {0};
		\node  (62) at (-0.75, -2.75) {$\rangle$};
		\node  (63) at (-0.75, -1.25) {$\rangle$};
		\node  (64) at (-2.25, -1.25) {1010};
		\node  (65) at (-7, -1.25) {$O'=$};
		\node  (66) at (-3.25, -1.25) {0};
		\node  (67) at (-4.25, -1.25) {1000};
		\node  (68) at (-1.25, -1.25) {0};
		\node  (69) at (-5.25, -1.25) {0};
		\node  (70) at (-0.75, -3.25) {$\rangle$};
		\node  (71) at (-7, -3.25) {$E'=$};
		\node  (72) at (-1.25, -3.25) {1010};
		\node  (73) at (-2.25, -3.25) {0};
		\node  (74) at (-3.25, -3.25) {1011};
		\node  (75) at (-4.25, -3.25) {0};
		\node  (76) at (-5.25, -3.25) {0};
		\node  (77) at (-0.75, -1.75) {$\rangle$};
		\node  (78) at (-7.25, -1.75) {$H_A[O']=$};
		\node  (79) at (-1.25, -1.75) {0};
		\node  (80) at (-2.25, -1.75) {$0101$};
		\node  (81) at (-3.25, -1.75) {0};
		\node  (82) at (-4.25, -1.75) {$0001$};
		\node  (83) at (-5.25, -1.75) {0};
		\node  (84) at (-0.75, -3.75) {$\rangle$};
		\node  (85) at (-1.25, -3.75) {$1010$};
		\node  (86) at (-2.25, -3.75) {0};
		\node  (87) at (-3.25, -3.75) {$0010$};
		\node  (88) at (-4.25, -3.75) {0};
		\node  (89) at (-7.25, -3.75) {$H_B[E']=$};
		\node  (90) at (-5.25, -3.75) {$0$};
		\node  (91) at (-7, -4.25) {$P=$};
		\node  (92) at (-5.25, -4.25) {$0$};
		\node  (93) at (-4.25, -4.25) {$0001$};
		\node  (94) at (-3.25, -4.25) {$0010$};
		\node  (95) at (-2.25, -4.25) {$0101$};
		\node  (96) at (-1.25, -4.25) {1010};
		\node  (97) at (-0.75, -4.25) {$\rangle$};
		\node  (98) at (-6.5, -1.25) {$\langle$};
		\node  (99) at (-6.5, -3.25) {$\langle$};
		\node  (100) at (-6.5, -1.75) {$\langle$};
		\node  (101) at (-6.5, -3.75) {$\langle$};
		\node  (102) at (-6.5, -4.25) {$\langle$};
		\node  (103) at (-7, 1.25) {$M_1=$};
		\node  (104) at (-6, 1.25) {$\ldots$};
		\node  (105) at (-5.25, 1.25) {$10000$};
		\node  (106) at (-4.25, 1.25) {1000};
		\node  (107) at (-3.25, 1.25) {$0100$};
		\node  (108) at (-2.25, 1.25) {$0010$};
		\node  (109) at (-1.25, 1.25) {0000};
		\node  (110) at (-0.75, 1.25) {$\rangle$};
		\node  (111) at (-6.5, 1.25) {$\langle$};
		\node  (112) at (0.5, 1.75) {$P=$};
		\node  (114) at (2, 1.75) {$0001$};
		\node  (115) at (3, 1.75) {$0010$};
		\node  (116) at (4, 1.75) {$0101$};
		\node  (117) at (5, 1.75) {1010};
		\node  (118) at (5.5, 1.75) {$\rangle$};
		\node  (119) at (1, 1.75) {$\langle$};
		\node  (120) at (5, 2.25) {\tiny $0$};
		\node  (121) at (4, 2.25) {\tiny $1$};
		\node  (122) at (3, 2.25) {\tiny $2$};
		\node  (123) at (2, 2.25) {\tiny $3$};
		\node  (124) at (2, 1.25) {0};
		\node  (125) at (3, 1.25) {$0010$};
		\node  (126) at (4, 1.25) {0};
		\node  (127) at (5, 1.25) {1010};
		\node  (128) at (5.5, 1.25) {$\rangle$};
		\node  (129) at (1, 1.25) {$\langle$};
		\node  (130) at (2, 0.75) {0};
		\node  (131) at (3, 0.75) {$0011$};
		\node  (132) at (4, 0.75) {0};
		\node  (133) at (5, 0.75) {1011};
		\node  (134) at (5.5, 0.75) {$\rangle$};
		\node  (135) at (1, 0.75) {$\langle$};
		\node  (136) at (2, 0) {$0001$};
		\node  (137) at (3, 0) {0};
		\node  (138) at (4, 0) {$0101$};
		\node  (139) at (5, 0) {0};
		\node  (140) at (5.5, 0) {$\rangle$};
		\node  (141) at (1, 0) {$\langle$};
		\node  (142) at (0.25, 1.25) {even =};
		\node  (143) at (0.25, 0.75) {right q. =};
		\node  (144) at (0.25, 0) {odd =};
		\node  (145) at (0.25, -0.5) {left q. =};
		\node  (146) at (0.25, -1.25) {off-p. =};
		\node  (147) at (2, -0.5) {0};
		\node  (148) at (3, -0.5) {0};
		\node  (149) at (4, -0.5) {0100};
		\node  (150) at (5, -0.5) {0};
		\node  (151) at (5.5, -0.5) {$\rangle$};
		\node  (152) at (2, -1.25) {0000};
		\node  (153) at (3, -1.25) {0011};
		\node  (154) at (4, -1.25) {0100};
		\node  (155) at (5, -1.25) {1011};
		\node  (156) at (5.5, -1.25) {$\rangle$};
		\node  (157) at (1, -0.5) {$\langle$};
		\node  (158) at (1, -1.25) {$\langle$};
	\end{pgfonlayer}
\end{tikzpicture}
    \caption{Example of a path sequence computation for $i=2$, for an array of size $8$ ($P$ is the output). On the right we show the steps to get the right, left and off-path sequences.}
    \label{fig:path-sequences}
\end{figure}






\section{Computing Prefix Minimum on Word Sequences}\label{sec:prefixminwordsequence}
We now show how to efficiently compute the prefix minimum on word sequences of length $\ell = O(w)$ in $O(\log \log \ell)$ time. We first show how to do so in constant time for word sequences of length $O(\sqrt{w})$. We then show how to implement this algorithm in parallel and then recursively leading to the result. 

Our algorithm often partitions a word sequence into multiple equal-length sequences to work on them in parallel. We define a $b$-way word sequence to be a word sequence $X = X_{s-1} \cdots X_0$ where each subsequence $X_i$ is a \emph{block} of length $b$. Thus, the total length of $X$ is $sb$. We use $X\angle{i,j}$ to denote entry $j$ in block $i$, that is, $X\angle{i,j} = X_i\angle{j}$. 

We will need a subroutine on $b$-way word sequences that we call \emph{left clear}, denoted $\mathsf{lclear}$. We first define it for words. Let $x$ be a bit string, and define $\mathsf{rmz}(x)$ to be the position of the rightmost $0$ bit of $x$. The operation $\mathsf{lclear}(x)$ is defined as $0^{|x|-\mathsf{rmz}(x)}\cdot 1^{\mathsf{rmz}(x)}$. Thus, $\mathsf{lclear}$ ``smears'' the rightmost $0$ of $x$ to the left. To extend the definition to $b$-way word sequences, let $X=X_{s-1}\cdots X_0$ be a binary $b$-way word sequence with $s$ blocks, and length $sb$. Then, $\mathsf{lclear}(X, b)$ is the $b$-way word sequence $Y=Y_{s-1}\cdots Y_0$ such that $Y_i=\mathsf{lclear}(X_i)$.

We can compute $Y=\mathsf{lclear}(X, b)$ in $O(1)$ time as follows. First, observe that for a word $x$ we can compute $\mathsf{lclear}(x)$ with $x\&\overline{x+1}$, see Knuth \cite{KnuthVol4}. To extend this to a $b$-way word sequence $X$, first, we use $\compress$ on $X$ to obtain a bitstring $x$ of length $sb = O(w)$ bits. Then, we compute $y=x\&\overline{x+x'}$, where $x'$ is the pre-computed constant $(0^{b-1}1)^{s}$, and the addition is done component-wise to avoid carries between components (the components are of $b$ bits each). Finally, we use $\spread$ on $y$ to get the desired word sequence $Y$.

\subsection{Prefix Minimum on Small Word Sequences} \label{subsec:prefixminsmallwords}
\begin{figure}[t]
\centering	
\begin{tikzpicture}
	\begin{pgfonlayer}{nodelayer}
		\node  (0) at (-4, 1.5) {$9$};
		\node  (1) at (-3.5, 1.5) {$2$};
		\node  (2) at (-3, 1.5) {$5$};
		\node  (3) at (-2.5, 1.5) {$3$};
		\node  (4) at (-1.75, 1.5) {$9$};
		\node  (5) at (-1.25, 1.5) {$2$};
		\node  (6) at (-0.75, 1.5) {$5$};
		\node  (7) at (-0.25, 1.5) {$3$};
		\node  (8) at (0.5, 1.5) {$9$};
		\node  (9) at (1, 1.5) {$2$};
		\node  (10) at (1.5, 1.5) {$5$};
		\node  (11) at (2, 1.5) {$3$};
		\node  (12) at (2.75, 1.5) {$9$};
		\node  (13) at (3.25, 1.5) {$2$};
		\node  (14) at (3.75, 1.5) {$5$};
		\node  (15) at (4.25, 1.5) {$3$};
		\node  (16) at (2.75, 3.5) {$9$};
		\node  (17) at (3.25, 3.5) {$2$};
		\node  (18) at (3.75, 3.5) {$5$};
		\node  (19) at (4.25, 3.5) {$3$};
		\node  (20) at (-4, 2) {$9$};
		\node  (21) at (-3.5, 2) {$9$};
		\node  (22) at (-3, 2) {$9$};
		\node  (23) at (-2.5, 2) {$9$};
		\node  (24) at (-1.75, 2) {$2$};
		\node  (25) at (-1.25, 2) {$2$};
		\node  (26) at (-0.75, 2) {$2$};
		\node  (27) at (-0.25, 2) {$2$};
		\node  (28) at (0.5, 2) {$5$};
		\node  (29) at (1, 2) {$5$};
		\node  (30) at (1.5, 2) {$5$};
		\node  (31) at (2, 2) {$5$};
		\node  (32) at (2.75, 2) {$3$};
		\node  (33) at (3.25, 2) {$3$};
		\node  (34) at (3.75, 2) {$3$};
		\node  (35) at (4.25, 2) {$3$};
		\node  (36) at (-4, 1) {$1$};
		\node  (37) at (-1.75, 1) {$1$};
		\node  (38) at (-1.25, 1) {$1$};
		\node  (39) at (-0.75, 1) {$1$};
		\node  (40) at (-0.25, 1) {$1$};
		\node  (41) at (0.5, 1) {$1$};
		\node  (42) at (1, 1) {$0$};
		\node  (43) at (1.5, 1) {$1$};
		\node  (44) at (2, 1) {$0$};
		\node  (45) at (-3.5, 1) {$0$};
		\node  (46) at (-3, 1) {$0$};
		\node  (47) at (-2.5, 1) {$0$};
		\node  (48) at (2.75, 1) {$1$};
		\node  (49) at (3.25, 1) {$0$};
		\node  (50) at (3.75, 1) {$1$};
		\node  (51) at (4.25, 1) {$1$};
		\node  (52) at (-4, 0.5) {$0$};
		\node  (53) at (-3.5, 0.5) {$0$};
		\node  (54) at (-3, 0.5) {$0$};
		\node  (55) at (-2.5, 0.5) {$0$};
		\node  (56) at (0.5, 0.5) {$0$};
		\node  (57) at (1, 0.5) {$0$};
		\node  (58) at (1.5, 0.5) {$0$};
		\node  (59) at (2, 0.5) {$0$};
		\node  (60) at (-1.75, 0.5) {$1$};
		\node  (61) at (-1.25, 0.5) {$1$};
		\node  (62) at (-0.75, 0.5) {$1$};
		\node  (63) at (-0.25, 0.5) {$1$};
		\node  (64) at (4.25, 0.5) {$1$};
		\node  (65) at (3.75, 0.5) {$1$};
		\node  (66) at (3.25, 0.5) {$0$};
		\node  (67) at (2.75, 0.5) {$0$};
		\node  (68) at (-4, -0.5) {$0$};
		\node  (69) at (-3.5, -0.5) {$0$};
		\node  (70) at (-3, -0.5) {$0$};
		\node  (71) at (-2.5, -0.5) {$0$};
		\node  (72) at (0.5, -0.5) {$0$};
		\node  (73) at (1, -0.5) {$0$};
		\node  (74) at (1.5, -0.5) {$0$};
		\node  (75) at (2, -0.5) {$0$};
		\node  (76) at (-1.75, -0.5) {$1$};
		\node  (77) at (-1.25, -0.5) {$1$};
		\node  (78) at (-0.75, -0.5) {$0$};
		\node  (79) at (-0.25, -0.5) {$0$};
		\node  (80) at (4.25, -0.5) {$1$};
		\node  (81) at (3.75, -0.5) {$1$};
		\node  (82) at (3.25, -0.5) {$0$};
		\node  (83) at (2.75, -0.5) {$0$};
		\node  (84) at (-4, -1.5) {$0$};
		\node  (85) at (-3.5, -1.5) {$0$};
		\node  (86) at (-3, -1.5) {$0$};
		\node  (87) at (-2.5, -1.5) {$0$};
		\node  (88) at (0.5, -1.5) {$0$};
		\node  (89) at (1, -1.5) {$0$};
		\node  (90) at (1.5, -1.5) {$0$};
		\node  (91) at (2, -1.5) {$0$};
		\node  (92) at (-1.75, -1.5) {$3$};
		\node  (93) at (-1.25, -1.5) {$2$};
		\node  (94) at (-0.75, -1.5) {$0$};
		\node  (95) at (-0.25, -1.5) {$0$};
		\node  (96) at (4.25, -1.5) {$0$};
		\node  (97) at (3.75, -1.5) {$1$};
		\node  (98) at (3.25, -1.5) {$0$};
		\node  (99) at (2.75, -1.5) {$0$};
		\node  (100) at (-4, -2.5) {$19$};
		\node  (101) at (-3.5, -2.5) {$18$};
		\node  (102) at (-3, -2.5) {$17$};
		\node  (103) at (-2.5, -2.5) {$16$};
		\node  (104) at (0.5, -2.5) {$11$};
		\node  (105) at (1, -2.5) {$10$};
		\node  (106) at (1.5, -2.5) {$9$};
		\node  (107) at (2, -2.5) {$8$};
		\node  (108) at (-1.75, -2.5) {$0$};
		\node  (109) at (-1.25, -2.5) {$0$};
		\node  (110) at (-0.75, -2.5) {$13$};
		\node  (111) at (-0.25, -2.5) {$12$};
		\node  (112) at (4.25, -2.5) {$0$};
		\node  (113) at (3.75, -2.5) {$0$};
		\node  (114) at (3.25, -2.5) {$6$};
		\node  (115) at (2.75, -2.5) {$7$};
		\node  (128) at (4.25, -1) {$0$};
		\node  (129) at (3.75, -1) {$1$};
		\node  (130) at (3.25, -1) {$2$};
		\node  (131) at (2.75, -1) {$3$};
		\node  (132) at (2.75, -3.5) {$2$};
		\node  (133) at (3.25, -3.5) {$2$};
		\node  (134) at (3.75, -3.5) {$3$};
		\node  (135) at (4.25, -3.5) {$3$};
		\node  (136) at (4.75, 3.5) {$\rangle$};
		\node  (137) at (4.75, 2) {$\rangle$};
		\node  (138) at (4.75, 1.5) {$\rangle$};
		\node  (139) at (4.75, 1) {$\rangle$};
		\node  (140) at (4.75, 0.5) {$\rangle$};
		\node  (141) at (4.75, -0.5) {$\rangle$};
		\node  (142) at (4.75, -1.5) {$\rangle$};
		\node  (143) at (4.75, -2.5) {$\rangle$};
		\node  (144) at (4.75, -1) {$\rangle$};
		\node  (145) at (4.75, -3.5) {$\rangle$};
		\node [align=right] (146) at (-4.75, 3.5) {$X=\hphantom{\langle}$};
		\node [align=right] (147) at (-4.75, 2) {$\widetilde{X}=\langle$};
		\node [align=right] (148) at (-4.75, 1.5) {$\widehat{X}=\langle$};
		\node [align=right] (149) at (-4.75, 1) {$C=\langle$};
		\node [align=right] (150) at (-4.75, 0.5) {$D=\langle$};
		\node [align=right] (151) at (-4.75, 0) {$M=\langle$};
		\node [align=right] (152) at (-4.75, -0.5) {$E=\langle$};
		\node [align=right] (153) at (-4.75, -1.5) {$E'=\langle$};
		\node [align=right] (154) at (-4.75, -2.5) {$E''=\langle$};
		\node [align=right] (155) at (-4.75, -1) {$P'=\langle$};
		\node [align=right] (156) at (-4.75, -3.5) {$Y=\hphantom{\langle}$};
		\node  (158) at (-4, 0) {$1$};
		\node  (159) at (-3.5, 0) {$0$};
		\node  (160) at (-3, 0) {$0$};
		\node  (161) at (-2.5, 0) {$0$};
		\node  (162) at (0.5, 0) {$1$};
		\node  (163) at (1, 0) {$1$};
		\node  (164) at (1.5, 0) {$1$};
		\node  (165) at (2, 0) {$0$};
		\node  (166) at (-1.75, 0) {$1$};
		\node  (167) at (-1.25, 0) {$1$};
		\node  (168) at (-0.75, 0) {$0$};
		\node  (169) at (-0.25, 0) {$0$};
		\node  (170) at (4.25, 0) {$1$};
		\node  (171) at (3.75, 0) {$1$};
		\node  (172) at (3.25, 0) {$1$};
		\node  (173) at (2.75, 0) {$1$};
		\node  (174) at (4.75, 0) {$\rangle$};
		\node  (175) at (2, 3.5) {$\langle$};
		\node  (176) at (2, -3.5) {$\langle$};
		\node  (177) at (-4, -3) {$19$};
		\node  (178) at (-3.5, -3) {$18$};
		\node  (179) at (-3, -3) {$17$};
		\node  (180) at (-2.5, -3) {$16$};
		\node  (181) at (0.5, -3) {$11$};
		\node  (182) at (1, -3) {$10$};
		\node  (183) at (1.5, -3) {$9$};
		\node  (184) at (2, -3) {$8$};
		\node  (185) at (-1.75, -3) {$3$};
		\node  (186) at (-1.25, -3) {$2$};
		\node  (187) at (-0.75, -3) {$13$};
		\node  (188) at (-0.25, -3) {$12$};
		\node  (189) at (4.25, -3) {$0$};
		\node  (190) at (3.75, -3) {$1$};
		\node  (191) at (3.25, -3) {$6$};
		\node  (192) at (2.75, -3) {$7$};
		\node  (193) at (4.75, -3) {$\rangle$};
		\node [align=right] (194) at (-4.75, -3) {$P=\langle$};
		\node  (195) at (-4, -2) {$19$};
		\node  (196) at (-3.5, -2) {$18$};
		\node  (197) at (-3, -2) {$17$};
		\node  (198) at (-2.5, -2) {$16$};
		\node  (199) at (0.5, -2) {$11$};
		\node  (200) at (1, -2) {$10$};
		\node  (201) at (1.5, -2) {$9$};
		\node  (202) at (2, -2) {$8$};
		\node  (203) at (-1.75, -2) {$15$};
		\node  (204) at (-1.25, -2) {$14$};
		\node  (205) at (-0.75, -2) {$13$};
		\node  (206) at (-0.25, -2) {$12$};
		\node  (207) at (4.25, -2) {$4$};
		\node  (208) at (3.75, -2) {$5$};
		\node  (209) at (3.25, -2) {$6$};
		\node  (210) at (2.75, -2) {$7$};
		\node  (211) at (4.75, -2) {$\rangle$};
		\node [align=right] (212) at (-4.75, -2) {$P''=\langle$};
		\node  (213) at (2, -1) {$0$};
		\node  (214) at (1.5, -1) {$1$};
		\node  (215) at (1, -1) {$2$};
		\node  (216) at (0.5, -1) {$3$};
		\node  (217) at (-0.25, -1) {$0$};
		\node  (218) at (-0.75, -1) {$1$};
		\node  (219) at (-1.25, -1) {$2$};
		\node  (220) at (-1.75, -1) {$3$};
		\node  (221) at (-2.5, -1) {$0$};
		\node  (222) at (-3, -1) {$1$};
		\node  (223) at (-3.5, -1) {$2$};
		\node  (224) at (-4, -1) {$3$};
		\node  (225) at (4.25, 2.5) {$0$};
		\node  (226) at (3.75, 2.5) {$1$};
		\node  (227) at (3.25, 2.5) {$2$};
		\node  (228) at (2.75, 2.5) {$3$};
		\node  (229) at (4.75, 2.5) {$\rangle$};
		\node [align=right] (230) at (-4.75, 2.5) {$\widehat{A}=\langle$};
		\node  (231) at (2, 2.5) {$0$};
		\node  (232) at (1.5, 2.5) {$1$};
		\node  (233) at (1, 2.5) {$2$};
		\node  (234) at (0.5, 2.5) {$3$};
		\node  (235) at (-0.25, 2.5) {$0$};
		\node  (236) at (-0.75, 2.5) {$1$};
		\node  (237) at (-1.25, 2.5) {$2$};
		\node  (238) at (-1.75, 2.5) {$3$};
		\node  (239) at (-2.5, 2.5) {$0$};
		\node  (240) at (-3, 2.5) {$1$};
		\node  (241) at (-3.5, 2.5) {$2$};
		\node  (242) at (-4, 2.5) {$3$};
		\node  (243) at (4.25, 3) {$0$};
		\node  (244) at (3.75, 3) {$0$};
		\node  (245) at (3.25, 3) {$0$};
		\node  (246) at (2.75, 3) {$0$};
		\node  (247) at (4.75, 3) {$\rangle$};
		\node [align=right] (248) at (-4.75, 3) {$\widetilde{A}=\langle$};
		\node  (249) at (2, 3) {$1$};
		\node  (250) at (1.5, 3) {$1$};
		\node  (251) at (1, 3) {$1$};
		\node  (252) at (0.5, 3) {$1$};
		\node  (253) at (-0.25, 3) {$2$};
		\node  (254) at (-0.75, 3) {$2$};
		\node  (255) at (-1.25, 3) {$2$};
		\node  (256) at (-1.75, 3) {$2$};
		\node  (257) at (-2.5, 3) {$3$};
		\node  (258) at (-3, 3) {$3$};
		\node  (259) at (-3.5, 3) {$3$};
		\node  (260) at (-4, 3) {$3$};
	\end{pgfonlayer}
\end{tikzpicture}
  \caption{Our prefix minimum algorithm, for $w=16$ and a word sequence of length $4$.}
  \label{fig:prefixmin}
\end{figure}
We now show how to compute the prefix minimum on a word sequence $X$ of length $\ell = O(\sqrt{w})$. For simplicity, we first assume all entries in $X$ are distinct. 
On a high-level, out algorithm first computes an all-to-all, parallel comparison of the elements of $X$. There are $l^2=O(w)$ comparisons, so we can do the computation in constant time. Then, we use the comparison results to obtain, for each word $X\angle{i}$, what indices of the prefix minimum are equal to $X\angle{i}$. The key observation is that $X\angle{i}$ is the prefix minimum at index $j$ if and only if $X\angle{i}$ is less than or equal to $X\angle{0}, X\angle{1}, \ldots, X\angle{j}$, and $i\leq j$. We can compute both conditions with the results of the all-to-all comparison as we show below. Finally, we copy each word $X\angle{i}$ to the indices of the output where it is the prefix minimum.

Our algorithm proceeds as follows. See also the example in Figure~\ref{fig:prefixmin}. 

\paragraph{Step 1: Compare all pairs of words in $X$} We construct a $b$-way word sequence $C$ containing the results of all pairwise comparisons of words in $X$. To do so, we first construct the word sequences: 
\begin{align*}
	\widetilde{X} &= \underbrace{X\angle{\ell-1} \cdots X\angle{\ell-1}}_\ell \cdot \underbrace{X\angle{\ell-2} \cdots X\angle{\ell-2}}_\ell \cdots \underbrace{X\angle{0} \cdots X\angle{0}}_\ell \\
	\widehat{X}&= \underbrace{X \cdot X \cdots X}_{\ell}	
\end{align*}
We compute these using shuffled read operations on $X$ with the constant word sequences
\begin{align*}
	\widetilde{A} &= \underbrace{\angle{\ell - 1, \ldots \ell-1}}_{\ell}\cdot \underbrace{\angle{\ell - 2, \ldots \ell-2}}_{\ell} \cdots \underbrace{\angle{0, \ldots 0}}_{\ell} \\
	\widehat{A}  &= \underbrace{\angle{\ell-1, \cdots 0} \cdots \angle{\ell-1, \cdots 0}}_{\ell} .
\end{align*} Note that both $\widehat{X}$ and $\widetilde{X}$ have length $\ell^2 = O(w)$. We then do a componentwise comparison of $\widehat{X}$ and $\widetilde{X}$. This produces a word sequence $C$ of length $\ell^2$, which viewed as an $\ell$-way word sequence is defined by: 
\begin{equation}\label{eq:compare} 
C\angle{i,j} = 
\begin{cases}
1 & \text{if $X\angle{i}  \leq X\angle{j}$} \\
0 & \text{otherwise} 
\end{cases}
\end{equation}
Thus, the $i$th block in $C$ stores the comparison of $X\angle{i}$ with all other words in~$X$.

\paragraph{Step 2: Compute Prefix Minima} 
We construct an $\ell$-way word sequence $E$ that contains the positions of the prefix minima. First, compute the $\ell$-way word sequence $D$ such that
\begin{equation}
	D\angle{i,j} = 
		\begin{cases}
			1 & \text{if $C\angle{i,j} = C\angle{i,j-1} = \cdots C\angle{i, 0} = 1$} \\
			0 & \text{otherwise} 
	\end{cases}
\end{equation}
Thus, the block $D_i$ in $D$ takes the rightmost $0$ in $C_i$ and "smears" it to the left. To compute this, we use the constant-time \textsf{lclear} operation. We then mask out all entries in $D$ where $i > j$ using an $\ell$-way mask $M = M_{\ell-1} \cdots M_0$, where $M\angle{i,j} = 1$ if $i \leq j$ and $0$ otherwise. The result is the desired word sequence $E$. By \eqref{eq:compare} we have that $D\angle{i,j} = 1$ iff $X\angle{i} \leq X\angle{j}, X\angle{j-1}, \ldots, X\angle{0}$. Hence, $E\angle{i,j} = 1$ iff $X\angle{i} \leq X\angle{j}, X\angle{j-1}, \ldots, X\angle{0}$ and $i \leq j$, i.e., $X\angle{i}$ is the prefix minimum of $X\angle{j-1}, \ldots, X\angle{0}$.

\paragraph{Step 3: Extract Prefix Minimum} 
We now extract the prefix minima entries indicated by $E$ and then compact them into a word sequence of length $\ell$. The key observation is that $E\angle{i,j} = 1$ implies that $X\angle{i}$ should appear in position $j$ in the final prefix minimum. We use the following constant word sequences:  
\begin{align*}
	P'  &= \underbrace{\angle{\ell-1, \cdots 0} \cdots \angle{\ell-1, \cdots 0}}_{\ell} \\
	P'' &= \angle{\ell^2 + \ell - 1, \ell^2 + \ell - 2, \ldots \ell}
\end{align*}
$P'$ contains in each block $P'\angle{i}$ the indices of the prefix minimum where $X\angle{i}$ has to appear, and $P''$ contains dummy indices (greater than $\ell$) in all other positions. Note how $P'=\widehat{A}$. 
We do a component-wise extraction of the positions in $P'$ wrt.\ $E$ and of $P''$ wrt.\ $\overline{E}$, where $\overline{E}$ is the negation of $E$. We \textit{or} the resulting sequences $E'$ and $E''$ together to get a word sequence $P$. We then do a shuffled write on $\widetilde{X}$ with $P$ and clear out all but the $\ell$ rightmost entries. The resulting word sequence $Y$ of length $\ell$ contains a $X\angle{i}$ in position $j$ iff $E\angle{i,j} = 1$ as desired. Since we assumed all input words are distinct and since $P''$ only contains distinct numbers greater than $\ell-1$, there is no write conflict.   

\bigskip
All of the above word sequences have length at most $\ell^2 = O(w)$, and thus, each step takes constant time. Recall that we assumed all entries in $X$ were distinct. If not, we may have a write conflict at the end of step 3. To fix this, we can always double the length of the input word sequence $X$ and represent each entry using two words consisting of the entry and the position in the sequence, thus breaking ties. This operation can be performed by applying a scattered read on $X$ with the indices $\angle{0, \ell-1, 0, \ell-2, \ldots}$, that moves the  values of $X$ to the even indices of a word sequence of length $2\ell$, and then $\mid$'ing with the word sequence $\angle{\ell-1, 0, \ell-2, 0, \ldots}$. We can then simulate the above algorithms with constant factor slowdown by using $2w$ bits of precision for each component, as discussed in Section \ref{sec:extendedultrawords}. In summary, we have shown the following result.

\begin{lemma}\label{lem:prefixminsmall}
	Given a word sequence $X$ of length $\ell = O(\sqrt{w})$, we can compute the prefix minimum of $X$ in $O(1)$ time. 
\end{lemma}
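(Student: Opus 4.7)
The plan is to exploit that $\ell = O(\sqrt{w})$ gives $\ell^2 = O(w)$, so any length-$\ell^2$ word sequence still fits in a constant number of ultrawords and can be manipulated in $O(1)$ time on the UWRAM. I would start with an all-pairs comparison: build two length-$\ell^2$ word sequences from $X$, namely a broadcast $\widetilde{X}$ in which each $X\angle{i}$ occupies $\ell$ consecutive positions, and a repeat $\widehat{X}$ formed by concatenating $\ell$ copies of $X$. Both are obtained by a single shuffled read of $X$ against a precomputed constant index sequence. A componentwise comparison of $\widehat{X}$ and $\widetilde{X}$ then yields a word sequence $C$, viewed as an $\ell$-way sequence, with $C\angle{i,j} = 1$ iff $X\angle{i} \leq X\angle{j}$.

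Next, I would turn $C$ into a marker sequence $E$ in which $E\angle{i,j} = 1$ iff $X\angle{i}$ is the prefix minimum at position $j$. The key observation is that this happens exactly when $X\angle{i} \leq X\angle{k}$ for every $0 \leq k \leq j$ and $i \leq j$. The first condition asks, within block $C_i$, whether the $j$th bit together with every bit to its right is $1$; a ``left clear'' primitive, whose details I defer to the appendix, applies this test to all blocks in parallel in constant time, producing a sequence $D$. The second condition is enforced by AND-ing $D$ with a precomputed upper-triangular $\ell$-way mask $M$, giving $E$. To dispatch the values, I would build an address sequence $P$ that carries the target index $j$ wherever $E\angle{i,j} = 1$ (taken from a constant sequence $P' = \widehat{A}$) and pairwise distinct dummy addresses $\geq \ell$ elsewhere (from a constant $P''$), combined by componentwise extraction followed by bitwise OR. A shuffled write of $\widetilde{X}$ using $P$, followed by truncating to the lowest $\ell$ entries, produces the prefix minimum. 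Each step touches at most $\ell^2 = O(w)$ components and so costs $O(1)$.

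The main obstacle is that the shuffled write requires distinct addresses, yet if $X$ has duplicates several $(i,j)$ pairs may try to write to the same output position $j$. To sidestep this I would preprocess $X$ by replacing each entry with a $2w$-bit value whose high half is $X\angle{i}$ and whose low half is the original index $i$, making all values distinct while preserving the prefix-minimum structure. The rest of the algorithm then runs on extended ultrawords of precision $2w$ as described in Section~\ref{sec:extendedultrawords}; since $\ell = O(\sqrt{w})$ the $\ell^2$ components still fit in a constant number of ultrawords, and the total time remains $O(1)$.
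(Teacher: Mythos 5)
Your proposal is correct and follows essentially the same route as the paper's proof: an all-pairs comparison via the broadcast/repeat sequences $\widetilde{X}$ and $\widehat{X}$ of length $\ell^2 = O(w)$, a left-clear plus triangular mask to mark prefix-minimum positions, a shuffled write through the combined address sequence $P' \mid P''$, and tie-breaking duplicates by pairing each value with its index in $2w$-bit extended precision. No gaps to report.
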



As a corollary, we implement the above operation to compute prefix minima of blocks in a $b$-way word sequence $X$. More precisely, let $X=X_{s-1} \cdots X_0$ be a $b$-way word sequence and define the \emph{$b$-way prefix minimum} of $X$ to be 
\[ \pmini^b(X) = \pmini(X_{s-1})\cdot\pmini(X_{s-2})\cdots \pmini(X_0)\;.\] 

To compute $\pmini^b(X)$, where $X$ has length $\ell = O(w/b)$, consider each block as a word sequence of size $b$ to which we apply the above algorithm in parallel as follows. 
In Step 1, we create two word sequences $\widehat{X}^{b^2}$ and $\widetilde{X}^{b^2}$ that are the concatenation of all the word sequences $\widehat{X}$ and $\widetilde{X}$ of size $b^2$ corresponding to each block, and we apply all the operations outlined above in parallel for each subsequence of $b^2$ words. The concatenation can be done with a shuffled read on $X$ with the pre-computed word sequences
\begin{align*}    
    \widehat{A}^{b^2} &=\underbrace{\angle{\ell-1, \ldots, \ell - b}\cdots \angle{\ell-1, \ldots, \ell - b}}_{b}\cdots \underbrace{\angle{2b-1, \ldots, b}\cdots\angle{2b-1, \ldots, b}}_{b}\cdot \underbrace{\angle{b-1, \ldots, 0}\cdots\angle{b-1, \ldots, 0}}_{b}, \\
    \widetilde{A}^{b^2} &= \underbrace{\angle{\ell-1, \ldots, \ell-1} \cdots \angle{\ell-b, \ldots, \ell-b}}_{b}\cdots \underbrace{\angle{2b-1, \ldots, 2b-1}\cdots{\angle{b, \ldots, b}}}_{b}\cdot\underbrace{\angle{b-1,\ldots b-1}\cdots \angle{0, \ldots, 0}}_{b}.
\end{align*}

In Step 2, the only change needed is to use a mask $M^{b^2}$ which is the concatenation of $s$ copies of the original mask $M$. For step 3, the corresponding $P'$ is equal to $\widehat{A}^{b^2}$, and $P''$ is again a word sequence with unique values greater than or equal to $\ell$.

Since the length of $X$ is $\ell=O(w/b)$, when $b\leq \sqrt{w}$ the number of blocks is $O(w/b^2)$ and the length of the word sequences $\widehat{X}$ and $\widetilde{X}$ is $O(w)$. Since the length of the word sequences we work with is $O(w)$ we can do all the operations in constant time. Note that, if $b>\sqrt{w}$, the condition $\ell=O(w/b)$ implies that $\ell=O(\sqrt{w})$, and thus we have a constant number of blocks to process, which we can do with the previous algorithm instead (see Lemma \ref{lem:prefixminsmall}). We have the following result:
 
%
%

\begin{corollary}\label{cor:prefixminparallel}
Given a $b$-way word sequence $X$ of length $O(w/b)$ we can compute the $b$-way prefix minimum of $X$ in constant time. 
\end{corollary}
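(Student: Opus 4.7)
The plan is to run the algorithm of Lemma~\ref{lem:prefixminsmall} simultaneously on all $s = \ell/b$ blocks of $X$, using the fact that the intermediate data of size $b^2$ per block sum to $sb^2 = \ell b = O(w)$ bits worth of components, which fits in a constant number of ultrawords. First I would split into two regimes. If $b > \sqrt{w}$, then $\ell = O(w/b) = O(\sqrt{w})$ forces the number of blocks $s = \ell/b$ to be $O(1)$, so we invoke Lemma~\ref{lem:prefixminsmall} on each of the constantly many blocks separately. Otherwise $b \leq \sqrt{w}$, and $s = O(w/b^2)$, which is the interesting case.

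In the interesting case I would mirror the three-step structure of Section~\ref{subsec:prefixminsmallwords}. For Step~1 I would use a single shuffled read on $X$ with the precomputed address sequences $\widehat{A}^{b^2}$ and $\widetilde{A}^{b^2}$ (defined in the statement's preamble) to build $\widehat{X}^{b^2}$ and $\widetilde{X}^{b^2}$, each a concatenation of the $s$ per-block sequences $\widehat{X}$ and $\widetilde{X}$. A single componentwise comparison then yields an $O(w)$-length word sequence $C^{b^2}$ whose $i$-th chunk of $b^2$ words is exactly the comparison matrix for block $X_i$. For Step~2, I would apply left clear to the entire $C^{b^2}$; since left clear operates on sub-blocks of a fixed length $b$ (corresponding to rows of individual comparison matrices) and these sub-blocks partition the whole sequence, the parallel invocation is well-defined. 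Masking with the concatenation $M^{b^2}$ of $s$ copies of the size-$b^2$ mask $M$ then gives the parallel version of $E$. For Step~3, I would extract with $P' = \widehat{A}^{b^2}$ (for positions where $E^{b^2}=1$) and a $P''$ containing distinct dummy values $\geq \ell$ everywhere else, OR the results, and perform one shuffled write of $\widetilde{X}^{b^2}$ under these targets. Finally I would keep only the rightmost $\ell$ components (discarding the dummy positions) to obtain the output $Y$, which by construction equals $\pmini(X_{s-1}) \cdots \pmini(X_0)$.

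The main things to verify are that (a) the blocks do not interfere in Step~2's left-clear, which is immediate since it only reads within sub-blocks of length $b$; (b) the shuffled write in Step~3 has no conflicts, which I would guarantee by the same trick used in Lemma~\ref{lem:prefixminsmall}, namely doubling each entry of $X$ with its position to enforce distinctness and switching to $2w$-bit precision per component as described in Section~\ref{sec:extendedultrawords}, so that collisions cannot arise either within a block or between blocks (since $P''$ uses globally distinct dummy values $\geq \ell$); and (c) all intermediate word sequences remain of length $O(w)$ so each step is $O(1)$ time. The primary obstacle I expect is bookkeeping rather than anything algorithmically new: carefully listing the precomputed constant sequences ($\widehat{A}^{b^2}$, $\widetilde{A}^{b^2}$, $M^{b^2}$, $P''$) and checking that every operation used in Lemma~\ref{lem:prefixminsmall} has a natural "$s$ blocks in parallel" analogue that is itself a single constant-time UWRAM instruction on words of total length $O(w)$. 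Combining these observations yields the claimed constant-time bound for $\pmini^b(X)$.
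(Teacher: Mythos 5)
Your proposal matches the paper's own argument essentially step for step: the same case split on $b \lessgtr \sqrt{w}$, the same precomputed address sequences $\widehat{A}^{b^2}$, $\widetilde{A}^{b^2}$, the concatenated mask $M^{b^2}$, the choice $P'=\widehat{A}^{b^2}$ with globally distinct dummies in $P''$, and the same $O(w)$ total-length accounting that makes each step a constant number of ultraword operations. The only cosmetic slip is describing the $sb^2=\ell b=O(w)$ bound as ``bits'' rather than words (components), which does not affect correctness.
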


\subsection{Prefix Minima on General Word Sequences} 
We now show how to recursively apply Corollary~\ref{cor:prefixminparallel} to compute the prefix minima on a word sequence $X$ of length $\ell = O(w)$. Given the $b$-way prefix minimum of $X$, we show how to compute the $b^2$-way prefix minimum of $X$ in constant time. We then show how to apply this recursively to obtain our $O(\log \log \ell)$ algorithm. Let $X^b = X^b_{s-1} \cdots X^b_{0}$ be the $b$-way prefix minimum of $X$.  Our algorithm proceeds as follows (see Figure \ref{fig:pref_min_step_5_2}). 


\paragraph{Step 1: Compute Prefix Minima of Block Minima}
We compute the word sequence 
$$
B^b = \pmini^b(X^b_{s-1}\angle{b - 1} \cdots X^b_{0}\angle{b-1})
$$
containing the $b$-way prefix minimum of the minimum (leftmost) entries of the blocks in $X^b$. To do so, we first shuffle the leftmost entries of the blocks into a word sequence $Y$ and then compute $B^b = \pmini^b(Y)$ using Corollary~\ref{cor:prefixminparallel} (note that $Y$ has length $O(w/b)$ as required). 

\paragraph{Step 2: Propagate Prefix Minima of Block Minima.}

We construct a $b^2$-way word sequence $\widetilde{B}^{b^2}$ that contains, for each block $X^b$, the minimum of the previous blocks of $X^b$ in the corresponding block of $b^2$ words. 

First, we compute the word sequences
\begin{align*}
\widehat{B}^{b^2} &= \underbrace{B^b\angle{s-1} \cdots B^b\angle{s-1}}_b \cdots \underbrace{B\angle{0} \cdots B\angle{0}}_b, \\
M &= \underbrace{0\cdots 0}_{b^2-b} \cdot \underbrace{1^w\cdots 1^w}_{b} \cdots \underbrace{0\cdots 0}_{b^2-b} \cdot \underbrace{1^w\cdots 1^w}_{b},
\end{align*}
where $\widehat{B}^{b^2}$ contains $b$ copies of every entry in $B^b$, and $M$ contains a repeated pattern of $b$ words of $1$s followed by $b^2-b$ $0$s. Both word sequences have length $\ell$. We compute $\widehat{B}^{b^2}$ using a shuffled read operation, and $M$ can be pre-computed. After this, we shift left $\widehat{B}^b$ by $b$ words, and we calculate the logical `$\mid$' of $\widehat{B}^b$ and $M$ to produce the word sequence $\widetilde{B}^b$. Finally, we compute the componentwise minimum of $\widetilde{B}^b$ and $X^b$ to produce $X^{b^2}$. 

To see why this is correct, consider a sequence of $b$ blocks of the $b^2$-way prefix minimum of $X$, denoted $X^{b^2}_{ib}, X^{b^2}_{ib+1},\allowbreak \ldots,\allowbreak X^{b^2}_{ib+b-1}$ for any $0\leq i < \ceil{\ell/b^2}$. We can obtain the value of any $X^{b^2}_{ib+j}$ by the componentwise minimum of $X^b_{ib+j}$, and a block that contains $b$ copies of the minimum of $X_{ib+j-1}, \allowbreak \ldots \allowbreak X_{ib}$. This minimum is computed in the $(j-1)$th word of $B^b_i$, so we copy that word $b$ times in $\widetilde{B}^b$, and we shift it left by $b$ words to compute the componentwise minimum with $X^b_{ib+j}$. Note that for $j=0$, the value of $X^{b^2}_{ib}$ is the same as the value of $X^b_{ib}$, therefore for these blocks, we instead compare to the maximum word available ($1^w$) in order not to change their value. See Fig \ref{fig:pref_min_step_5_2} for a visualization.

\medskip
Each operation uses constant time, and hence we have the following result. 

\begin{lemma}\label{lem:prefixminrecursion}
	Let $X$ be a word sequence of length $\ell = O(w/b)$. Given a $b$-way prefix minimum of $X$, we can compute a $b^2$-way prefix minimum of $X$ in constant time. 
\end{lemma}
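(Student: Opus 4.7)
The proof is a verification of the algorithm sketched in Steps 1 and 2. My plan is to derive an explicit formula for each entry of the target sequence $X^{b^2}$, show that the shift-and-mask construction of Step 2 realises this formula in every position, and then confirm that every intermediate operation runs in constant time on a sequence of length $O(w)$.

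The starting point is the identity, for $0 \le i < \ell/b^2$, $0 \le j < b$, and $0 \le t < b$,
\[
X^{b^2}_{ib+j}\angle{t} \;=\; \min\bigl(X^b_{ib+j}\angle{t},\; m_{i,j}\bigr), \qquad m_{i,j} \;=\; \min_{k=0}^{j-1} X^b_{ib+k}\angle{b-1},
\]
with the convention $m_{i,0} = 1^w$. This follows from unrolling the definition of the $b^2$-way prefix minimum, using the hypothesis that $X^b$ is the $b$-way prefix minimum of $X$, which makes $X^b_{ib+k}\angle{b-1}$ equal to the minimum of the $(ib+k)$-th $b$-block of $X$. Step 1 produces exactly the sequence of $m_{i,j}$'s: the shuffled read yields $Y\angle{k} = X^b_k\angle{b-1}$ in constant time, and Corollary~\ref{cor:prefixminparallel} applied to $Y$ (of length $O(w/b)$) produces $B^b$ in constant time with $B^b\angle{ib+j-1} = m_{i,j}$ for $j \ge 1$.

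Step 2 must then place $m_{i,j}$ into every position of the $j$-th $b$-sub-block of the $i$-th $b^2$-block of $\widetilde{B}^{b^2}$. The shuffled read that builds $\widehat{B}^{b^2}$ replicates each entry of $B^b$ into $b$ consecutive words, so the $(ib+j)$-th $b$-sub-block of $\widehat{B}^{b^2}$ holds $b$ copies of $B^b\angle{ib+j}$. The subsequent left shift by $b$ words slides each such sub-block one sub-block to the left, so the $j$-th sub-block of the $i$-th $b^2$-block receives copies of $B^b\angle{ib+j-1} = m_{i,j}$ whenever $j \ge 1$. The shift simultaneously contaminates the $j=0$ sub-blocks with values that have crossed $b^2$-block boundaries; OR-ing with the precomputed mask $M$, whose $j=0$ positions hold $1^w$ and whose remaining positions hold $0$, overwrites the contaminated entries with the neutral value $1^w = m_{i,0}$ while leaving the other sub-blocks untouched. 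A final componentwise minimum of $\widetilde{B}^{b^2}$ with $X^b$ then realises the displayed identity and produces $X^{b^2}$.

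The main obstacle I expect is the bookkeeping around the shift, which is a sequence-wide operation oblivious to the $b^2$-block structure: one must verify that the OR with $M$ both cleans up the cross-boundary contamination between successive $b^2$-blocks and also handles the rightmost $b^2$-block, where the shift introduces zeros that are likewise harmlessly replaced by $1^w$. Once this is confirmed, the runtime bound is immediate, since each of the shuffled reads, the shift, the OR, the componentwise min, and the single invocation of Corollary~\ref{cor:prefixminparallel} runs in constant time on word sequences of length $O(w)$.
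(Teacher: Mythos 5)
Your proposal is correct and follows essentially the same route as the paper: the same two-step construction (prefix minima of block minima via Corollary~\ref{cor:prefixminparallel}, then replicate, shift left by $b$ words, OR with the mask $M$, and take a componentwise minimum with $X^b$), with the same treatment of the $j=0$ sub-blocks via the neutral value $1^w$. Your explicit formula for $X^{b^2}_{ib+j}\angle{t}$ and the careful accounting of cross-boundary contamination after the shift just make precise the indexing the paper states informally.
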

It now follows that we can compute the prefix minimum of a word of length $\ell$ by applying Lemma~\ref{lem:prefixminrecursion} for double exponentially increasing values of $b$ over $O(\log \log \ell)$ rounds, starting with a 2-way prefix minimum, that can be computed directly by Corollary \ref{cor:prefixminparallel} since $b=2$ is a constant. Hence, we have the following result. 

\begin{theorem}\label{thm:wordsequenceprefixmin}
	Given a word sequence $X$ of length $\ell = O(w)$, we can compute the prefix minimum of $X$ in $O(\log \log \log \ell)$ time. 
\end{theorem}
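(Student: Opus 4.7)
The plan is to compute $\pmini(X)$ by an iterated squaring of the block width via Lemma~\ref{lem:prefixminrecursion}, bootstrapped by Corollary~\ref{cor:prefixminparallel}. First I would view $X$ (of length $\ell = O(w)$) as a $2$-way word sequence and invoke Corollary~\ref{cor:prefixminparallel} with the constant $b = 2$; since $b$ is constant the length requirement $O(w/b) = O(w)$ is trivially met, and the corollary delivers the $2$-way prefix minimum of $X$ in $O(1)$ time. This is the base case $X^{b_0}$ with $b_0 = 2$.

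Next I would apply Lemma~\ref{lem:prefixminrecursion} repeatedly. Each application converts a $b$-way prefix minimum of $X$ into a $b^2$-way prefix minimum in $O(1)$ time, so the schedule $b_{i+1} = b_i^2$ doubles the exponent at each round and yields $b_i = 2^{2^i}$. The procedure terminates as soon as $b_i \ge \ell$, at which point the whole sequence is a single block and the $b_i$-way prefix minimum coincides with $\pmini(X)$. Within each round, Step~1 of Lemma~\ref{lem:prefixminrecursion} invokes Corollary~\ref{cor:prefixminparallel} on the sequence of block minima, which has length $\ell/b = O(w/b)$ and so satisfies the corollary's hypothesis; once $b$ exceeds $\sqrt{w}$ only a constant number of blocks remain, and as described right before the statement of Corollary~\ref{cor:prefixminparallel} one substitutes the direct algorithm of Lemma~\ref{lem:prefixminsmall}, keeping the round $O(1)$.

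What remains is the time accounting. Per-round cost is $O(1)$ by Lemma~\ref{lem:prefixminrecursion}, and the iteration count is the smallest $i$ with $2^{2^i} \ge \ell$; this, combined with the constant per-round cost, is the quantity that must be matched against the bound stated in the theorem. The main obstacle, therefore, is to argue that the iteration count and the constant-time guarantee together yield precisely the $O(\log\log\log\ell)$ bound asserted in the statement, and to verify that at the transition where $b$ crosses $\sqrt{w}$ (so that Step~1 switches from Corollary~\ref{cor:prefixminparallel} to Lemma~\ref{lem:prefixminsmall}) the $b$-way invariant maintained by Lemma~\ref{lem:prefixminrecursion} and the length constraint $O(w/b)$ continue to hold, so that the terminal $b_i$-way prefix minimum really is $\pmini(X)$.
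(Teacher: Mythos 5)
Your proposal is essentially identical to the paper's proof: bootstrap with the $2$-way prefix minimum from Corollary~\ref{cor:prefixminparallel} (with $b=2$ constant), then repeatedly square the block width via Lemma~\ref{lem:prefixminrecursion} until $b_i = 2^{2^i} \ge \ell$, at $O(1)$ cost per round. The ``obstacle'' you flag is real but is a typo in the theorem statement rather than a gap in your argument: the round count is $\Theta(\log\log\ell)$, which is the bound claimed everywhere else in the paper (abstract, introduction, and the preamble of Section~\ref{sec:prefixminwordsequence}) and exactly what is needed to obtain Theorem~\ref{thm:main} from Theorem~\ref{thm:reduction} with $\ell = O(\log n)$, so the statement's $O(\log\log\log\ell)$ should read $O(\log\log\ell)$; neither your proof nor the paper's establishes the triple-logarithmic bound.
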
 

Plugging in Theorem~\ref{thm:wordsequenceprefixmin} into our reduction of Theorem~\ref{thm:reduction}, we have shown the main result of Theorem~\ref{thm:main}. The space consumption follows from the observation that if we pad all inputs to the prefix minimum to length $\ell$, the pre-computed word sequences that are required for Theorem 6 are always the same. Therefore, we only need to pre-compute $O(\log \log n)$ word sequences of length $O(\log n)$ for the data structure. This gives an extra space of $O(\log n\cdot \log \log n)=o(n)$.

\begin{figure}[t]
    \centering
    \scalebox{1.0}{
\begin{tikzpicture}
	\begin{pgfonlayer}{nodelayer}
		\node [align=right] (0) at (-4.75, 2) {$X=\langle$};
		\node [align=right] (1) at (-4.75, 1.5) {$X^b=\langle$};
		\node [align=right] (2) at (-2.5, 0.5) {$Y=\langle$};
		\node [align=right] (3) at (-2.5, 0) {$B^b=\langle$};
		\node [align=right] (4) at (-4.75, -1.75) {$\widehat{B}^b=\langle$};
		\node [align=right] (5) at (-4.75, -3.75) {$X^{b^2}=\langle$};
		\node  (6) at (-4, 2) {12};
		\node  (7) at (-4, 1.5) {9};
		\node  (8) at (-1.75, -1.75) {2};
		\node  (9) at (-4, -3.75) {2};
		\node  (10) at (-3.5, 2) {9};
		\node  (11) at (-3, 2) {10};
		\node  (12) at (-2.5, 2) {11};
		\node  (13) at (-1.75, 2) {6};
		\node  (14) at (-1.25, 2) {2};
		\node  (15) at (-0.75, 2) {3};
		\node  (16) at (-0.25, 2) {5};
		\node  (17) at (0.5, 2) {5};
		\node  (18) at (1, 2) {9};
		\node  (19) at (1.5, 2) {10};
		\node  (20) at (2, 2) {8};
		\node  (21) at (2.75, 2) {8};
		\node  (22) at (3.25, 2) {4};
		\node  (23) at (3.75, 2) {5};
		\node  (24) at (4.25, 2) {6};
		\node  (25) at (4.75, 2) {$\rangle$};
		\node  (26) at (-3.5, 1.5) {9};
		\node  (27) at (-3, 1.5) {10};
		\node  (28) at (-2.5, 1.5) {11};
		\node  (29) at (-1.75, 1.5) {2};
		\node  (30) at (-1.25, 1.5) {2};
		\node  (31) at (-0.75, 1.5) {3};
		\node  (32) at (-0.25, 1.5) {5};
		\node  (33) at (0.5, 1.5) {5};
		\node  (34) at (1, 1.5) {8};
		\node  (35) at (1.5, 1.5) {8};
		\node  (36) at (2, 1.5) {8};
		\node  (37) at (2.75, 1.5) {4};
		\node  (38) at (3.25, 1.5) {4};
		\node  (39) at (3.75, 1.5) {5};
		\node  (40) at (4.25, 1.5) {6};
		\node  (41) at (4.75, 1.5) {$\rangle$};
		\node  (42) at (-1.25, 0.5) {2};
		\node  (43) at (-0.75, 0.5) {5};
		\node  (44) at (-1.75, 0.5) {9};
		\node  (45) at (-0.25, 0.5) {4};
		\node  (46) at (-1.75, 0) {2};
		\node  (47) at (-1.25, 0) {2};
		\node  (48) at (-0.75, 0) {4};
		\node  (49) at (-0.25, 0) {4};
		\node  (50) at (-1.25, -1.75) {2};
		\node  (51) at (-0.75, -1.75) {2};
		\node  (52) at (-0.25, -1.75) {2};
		\node  (53) at (-3.5, -3.75) {2};
		\node  (54) at (-3, -3.75) {2};
		\node  (55) at (-2.5, -3.75) {2};
		\node  (56) at (0.5, -1.75) {4};
		\node  (57) at (1, -1.75) {4};
		\node  (58) at (1.5, -1.75) {4};
		\node  (59) at (2, -1.75) {4};
		\node  (60) at (-1.75, -3.75) {2};
		\node  (61) at (-1.25, -3.75) {2};
		\node  (62) at (-0.75, -3.75) {3};
		\node  (63) at (-0.25, -3.75) {4};
		\node  (64) at (2.75, -1.75) {4};
		\node  (65) at (3.25, -1.75) {4};
		\node  (66) at (3.75, -1.75) {4};
		\node  (67) at (4.25, -1.75) {4};
		\node  (68) at (0.5, -3.75) {4};
		\node  (69) at (1, -3.75) {4};
		\node  (70) at (1.5, -3.75) {4};
		\node  (71) at (2, -3.75) {4};
		\node  (72) at (2.75, -3.75) {4};
		\node  (73) at (3.25, -3.75) {4};
		\node  (74) at (3.75, -3.75) {5};
		\node  (75) at (4.25, -3.75) {6};
		\node  (76) at (4.75, -3.75) {$\rangle$};
		\node  (77) at (2.75, -2.75) {$1^w$};
		\node  (78) at (3.25, -2.75) {$1^w$};
		\node  (79) at (3.75, -2.75) {$1^w$};
		\node  (80) at (4.25, -2.75) {$1^w$};
		\node  (81) at (4.75, -1.75) {$\rangle$};
		\node  (82) at (-4, 1.25) {};
		\node  (83) at (-1.75, 1.25) {};
		\node  (84) at (0.5, 1.25) {};
		\node  (85) at (2.75, 1.25) {};
		\node  (86) at (-1.75, 0.75) {};
		\node  (87) at (-1.25, 0.75) {};
		\node  (88) at (-0.75, 0.75) {};
		\node  (89) at (-0.25, 0.75) {};
		\node  (90) at (0.25, 0.5) {$\rangle$};
		\node  (91) at (0.25, 0) {$\rangle$};
		\node  (92) at (-0.25, -0.25) {};
		\node  (93) at (-1.75, -0.25) {};
		\node  (94) at (-1.25, -0.25) {};
		\node  (95) at (-0.75, -0.25) {};
		\node  (96) at (4.25, -1.5) {};
		\node  (97) at (3.75, -1.5) {};
		\node  (98) at (3.25, -1.5) {};
		\node  (99) at (2.75, -1.5) {};
		\node  (100) at (2, -1.5) {};
		\node  (101) at (1.5, -1.5) {};
		\node  (102) at (1, -1.5) {};
		\node  (103) at (0.5, -1.5) {};
		\node  (104) at (-0.25, -1.5) {};
		\node  (105) at (-0.75, -1.5) {};
		\node  (106) at (-1.25, -1.5) {};
		\node  (107) at (-1.75, -1.5) {};
		\node  (108) at (-4, -1.75) {2};
		\node  (109) at (-3.5, -1.75) {2};
		\node  (110) at (-3, -1.75) {2};
		\node  (111) at (-2.5, -1.75) {2};
		\node  (112) at (-2.5, -1.5) {};
		\node  (113) at (-3, -1.5) {};
		\node  (114) at (-3.5, -1.5) {};
		\node  (115) at (-4, -1.5) {};
		\node  (116) at (-4, -2.75) {0};
		\node  (117) at (-3.5, -2.75) {0};
		\node  (118) at (-3, -2.75) {0};
		\node  (119) at (-2.5, -2.75) {0};
		\node  (120) at (-1.75, -2.75) {0};
		\node  (121) at (-1.25, -2.75) {0};
		\node  (122) at (-0.75, -2.75) {0};
		\node  (123) at (-0.25, -2.75) {0};
		\node  (124) at (0.5, -2.75) {0};
		\node  (125) at (1, -2.75) {0};
		\node  (126) at (1.5, -2.75) {0};
		\node  (127) at (2, -2.75) {0};
		\node [align=right] (128) at (-4.75, -2.75) {$M=\langle$};
		\node  (129) at (4.75, -2.75) {$\rangle$};
		\node [align=right] (130) at (-4.75, -3.25) {$\widetilde{B}^{b^2}=\langle$};
		\node  (131) at (-4, -3.25) {2};
		\node  (132) at (-3.5, -3.25) {2};
		\node  (133) at (-3, -3.25) {2};
		\node  (134) at (-2.5, -3.25) {2};
		\node  (135) at (-1.75, -3.25) {4};
		\node  (136) at (-1.25, -3.25) {4};
		\node  (137) at (-0.75, -3.25) {4};
		\node  (138) at (-0.25, -3.25) {4};
		\node  (139) at (0.5, -3.25) {4};
		\node  (140) at (1, -3.25) {4};
		\node  (141) at (1.5, -3.25) {4};
		\node  (142) at (2, -3.25) {4};
		\node  (143) at (2.75, -3.25) {$1^w$};
		\node  (144) at (3.25, -3.25) {$1^w$};
		\node  (145) at (3.75, -3.25) {$1^w$};
		\node  (146) at (4.25, -3.25) {$1^w$};
		\node  (147) at (4.75, -3.25) {$\rangle$};
		\node [align=right] (148) at (-5.125, -2.25) {$\widehat{B}^b \ll b=\langle$};
		\node  (149) at (-4, -2.25) {2};
		\node  (150) at (-3.5, -2.25) {2};
		\node  (151) at (-3, -2.25) {2};
		\node  (152) at (-2.5, -2.25) {2};
		\node  (153) at (-1.75, -2.25) {4};
		\node  (154) at (-1.25, -2.25) {4};
		\node  (155) at (-0.75, -2.25) {4};
		\node  (156) at (-0.25, -2.25) {4};
		\node  (157) at (0.5, -2.25) {4};
		\node  (158) at (1, -2.25) {4};
		\node  (159) at (1.5, -2.25) {4};
		\node  (160) at (2, -2.25) {4};
		\node  (161) at (4.75, -2.25) {$\rangle$};
		\node  (162) at (2.75, -2.25) {0};
		\node  (163) at (3.25, -2.25) {0};
		\node  (164) at (3.75, -2.25) {0};
		\node  (165) at (4.25, -2.25) {0};
	\end{pgfonlayer}
	\begin{pgfonlayer}{edgelayer}
		\draw [style=Right] (82.center) to (86.center);
		\draw [style=Right] (83.center) to (87.center);
		\draw [style=Right] (84.center) to (88.center);
		\draw [style=Right] (85.center) to (89.center);
		\draw [style=Right] (92.center) to (96.center);
		\draw [style=Right] (92.center) to (97.center);
		\draw [style=Right] (92.center) to (98.center);
		\draw [style=Right] (92.center) to (99.center);
		\draw [style=Right] (95.center) to (100.center);
		\draw [style=Right] (95.center) to (101.center);
		\draw [style=Right] (95.center) to (102.center);
		\draw [style=Right] (95.center) to (103.center);
		\draw [style=Right] (94.center) to (104.center);
		\draw [style=Right] (94.center) to (105.center);
		\draw [style=Right] (94.center) to (106.center);
		\draw [style=Right] (94.center) to (107.center);
		\draw [style=Right] (93.center) to (112.center);
		\draw [style=Right] (93.center) to (113.center);
		\draw [style=Right] (93.center) to (114.center);
		\draw [style=Right] (93.center) to (115.center);
	\end{pgfonlayer}
\end{tikzpicture}

}
\caption{Illustration of Step 2 of Section 5.2: computing the $16$-way prefix minimum of a word sequence $X$, given the $4$-way prefix minimum of $X$ in $X^b$. Note that $\widehat{B}^b$ is shifted left by $b$ words.}
\label{fig:pref_min_step_5_2}
\end{figure}

\section{Conclusion and Open Problems}
We have shown how to build an $O(n)$ space data structure to solve Dynamic Range Minimum Queries in $O(\log \log \log n)$ time in the UWRAM model. We conclude with three natural open problems: 
\begin{itemize}
    \item Our results hold even in the restricted UWRAM model. We wonder whether or not this is optimal in this model or even in the stronger multiplication UWRAM.
    \item Some variations of the dynamic RMQ problem also consider dynamic updates to the underlying array (see e.g., Arge et al.~\cite{AFSN2013}). It appears interesting and non-trivial to extend our techniques to this scenario.
    \item Some of our techniques in the paper are likely practical given the current state of processor technology (e.g., AVX-512 instruction set). We wonder if a practical version of our techniques implemented on a modern processor can outperform current state-of-the-art implementations of dynamic range minimum queries.
\end{itemize}




\bibliographystyle{abbrv}
\bibliography{paper}

\end{document}